\newtheorem{definition}{Definition}
\newtheorem{lemma}{Lemma}
\newtheorem{theorem}{Theorem}
\newtheorem{proposition}{Proposition}
\theoremstyle{remark}
\newtheorem{remark}{Remark}
\newcommand\wt{\mathrm{wt}_2}
\newcommand\ftwo{\mathbb{F}_{2}}
\newcommand{\F}{\mathbb{F}}
\renewcommand\footnotemark{}
\title{Gold Functions and Switched Cube Functions Are Not 0-Extendable in Dimension $n > 5$\thanks{This version of the article has been accepted for publication, after peer review
but is not the Version of Record and does not reflect post-acceptance improvements, or any
corrections. The Version of Record is available online at: \url{https://doi.org/10.1007/s10623-022-01111-6}.}}
\author[1]{Christof Beierle}
\affil[1]{Faculty of Computer Science, Ruhr University Bochum, Universit\"atsstra\ss e 150, 44801 Bochum, Germany}
\author[2,3]{Claude Carlet}
\affil[2]{LAGA, University of Paris 8, Saint-Denis, France}
\affil[3]{Department of Informatics, University of Bergen, PB 7803, 5020 Bergen, Norway}
\date{}
\begin{document}
\maketitle
\begin{abstract}
    In  the independent works by Kalgin and Idrisova and by Beierle, Leander and Perrin, it was observed that the Gold APN functions  over $\F_{2^5}$ give rise to a quadratic APN function in dimension 6 having maximum possible linearity of $2^5$ (that is, minimum possible nonlinearity $2^4$). In this article, we show that the case of $n 
    \leq 5$ is quite special in the sense that Gold APN functions in dimension $n>5$ cannot be extended to quadratic APN functions in dimension $n+1$ having maximum possible linearity. In the second part of this work, we show that this is also the case for APN functions of the form $x \mapsto x^3 + \mu(x)$ with $\mu$ being a quadratic Boolean function.
\end{abstract}

 {\bf Keywords:} Gold function, APN, linearity, nonlinearity, trim, extension

\section{Introduction and Preliminaries}
Throughout this work, let $n \in \mathbb{N}^* \coloneqq \{1,2,\dots\}$ be a positive integer. We call \emph{$(n,n)$-functions}, or \emph{vectorial functions in $n$ variables},  the functions from $\ftwo ^n$ to itself. These functions can also be viewed as functions from $\F_{2^n}$ to itself through the choice of a basis of $\F_{2^n}$ as a vector space over $\ftwo$. We call \emph{component functions} of $F$ the nonzero linear combinations of its coordinate functions; in the case of a function from $\F_{2^n}$ to itself, the component functions equal $tr(vF(x))$, where $v\in \F_{2^n}$ is nonzero and $tr(x)=x+x^2+x^{2^2}+\dots +x^{2^{n-1}}$ is the absolute trace function over $\F_{2^n}$. The \emph{nonlinearity} $nl(F)$ of such a function $F$ equals the minimum Hamming distance between the component functions of $F$ and the affine Boolean functions (i.e., the linear combinations of the input coordinates and constant function 1  -- in the case of a function from $\F_{2^n}$ to itself, the affine functions have the form $tr(ux+a)$, $u,a\in \F_{2^n}$). We call the \emph{linearity} of $F$ the (non-negative) value $2^n-2 \cdot nl(F)$. It should be as low as possible for the function to allow the resistance to the linear attack of a block cipher using it as a substitution box, see e.g., \cite{carlet2021boolean}.

\emph{Almost perfect nonlinear (APN) functions} in $n$ variables are those $(n,n)$-functions such that every derivative $D_aF(x)=F(x)+F(x+a)$, $a\in \ftwo ^n$, $a\neq 0$, is 2-to-1 (i.e., is such that the preimage of any element of the codomain has size 0 or 2).
An important open problem mentioned in \cite{carlet2021boolean} on APN functions is to determine whether such functions can have low nonlinearity. 
All known infinite classes of APN functions have a rather high nonlinearity  (while there exist a few sporadic quadratic functions with lowest possible nonzero nonlinearity $2^{n-2}$, see \cite{beierle2021new}). It is not clear whether this property of the known infinite classes is because they are in fact peculiar (and would therefore not be representative of general APN functions) or whether there is a significant lower bound to be found on the nonlinearity of a large subclass of APN functions. There is a lower bound stronger than strict positivity, which is valid for a large subclass of APN functions that includes all known ones, see \cite{carlet2021properties}, but it is not significant. We consider that determining which is true between the two terms of the above alternative is currently the most important open problem on substitution boxes (vectorial functions) along with the ``big APN problem" (consisting in finding an APN permutation in an even number of variables at least 8). Indeed, if it happens that the known infinite classes are peculiar, it would mean that we know almost nothing on the whole class of APN functions. 

A way to tackle the question on the nonlinearity of APN functions is to first focus on quadratic functions in dimension $n \geq 3$, whose lowest nonzero nonlinearity is known. Recall that as proved in \cite{carlet2010vectorial}, all APN functions in dimension $n\geq 3$ have strictly positive nonlinearity. The lowest nonlinearity of quadratic APN functions then equals $2^{n-2}$ (and the largest linearity less than $2^n$ equals $2^{n-1}$).  A secondary construction of APN quadratic functions has been studied in \cite{DBLP:journals/corr/abs-2108-13280}, in which the functions are obtained as the extensions of APN quadratic functions in one less variable, and there is a way of imposing that the resulting APN function has nonlinearity $2^{n-2}$.
This results in the following special type of quadratic APN functions, called \emph{0-extendable functions}.

\begin{definition}[\cite{DBLP:journals/corr/abs-2108-13280}]
\label{def:0extendable}
Let $n \geq 3$. A quadratic APN function $F \colon \F_{2^n} \rightarrow \F_{2^n}$ is called \emph{0-extendable} if there exists a linearized polynomial $L \in \F_{2^n}[X]$ and a non-zero element $a \in \F_{2^n}$ such that the function
\begin{eqnarray*}
T:\F_{2^n} \times \ftwo &\to& \F_{2^n} \times \ftwo \\
\left(\begin{array}{c}
     x  \\
     y 
\end{array}\right) &\mapsto& 
\left(\begin{array}{c}
     F(x)  \\
     0 
\end{array}\right)+
\left(\begin{array}{c}
     L(x)  \\
     tr(a x) 
\end{array}\right)\cdot y
\end{eqnarray*}
is APN. We call such a function $T$ a
$(0,L,a)$-extension of $F$.
\end{definition}

Note that the property of being 0-extendable is invariant under EA-equivalence. We recall that two functions $F \colon V \rightarrow V$ and $G \colon W \rightarrow W$, with $V$ and $W$ being $n$-dimensional $\F_2$ vector spaces, are called \emph{EA-equivalent} if there exist affine bijections $A \colon W \rightarrow V$, $B \colon V \rightarrow W$ and an affine function $C \colon W \rightarrow W$ such that $G = B \circ F \circ A + C$. The following proposition outlines that a quadratic APN function with the theoretically highest possible value on its linearity is, as recalled above, necessarily obtained from a 0-extendable APN function in one dimension lower.

\begin{proposition}[\cite{DBLP:journals/corr/abs-2108-13280}]
\label{prop:extension}
Let $n \geq 3$. The linearity of a $(0,L,a)$-extension of a quadratic APN function $F \colon \F_{2^n} \rightarrow \F_{2^n}$ is equal to $2^{n}$. Moreover, any quadratic APN function $G \colon \F_{2^{n+1}} \rightarrow \F_{2^{n+1}}$ with linearity $2^n$ is EA-equivalent to some $(0,L,a)$-extension of a quadratic APN function in dimension $n$. 
\end{proposition}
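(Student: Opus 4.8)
The plan is to phrase everything through the Walsh transform $W_H(u,v)=\sum_{x}(-1)^{tr(vH(x))+tr(ux)}$, using the standard fact that the linearity $\mathcal{L}(H)$ of an $(m,m)$-function equals $\max_{v\neq 0,\,u}|W_H(u,v)|$, together with the structure of quadratic Boolean functions: each component $tr(vH)$ is quadratic, its associated symplectic form $B_v(x,x')=tr(vH(x+x'))+tr(vH(x))+tr(vH(x'))+tr(vH(0))$ has some even rank $2r_v$, and then $\max_u|W_H(u,v)|=2^{m-r_v}$. In particular a nonzero component is affine exactly when $r_v=0$, and a quadratic APN function in dimension $m\geq 3$ has no affine component, since its nonlinearity is strictly positive.

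For the first assertion, write $T(x,y)=(F(x)+L(x)y,\ tr(ax)y)$ with $m=n+1$. For the lower bound I would isolate the component indexed by $(v,c)=(0,1)$, namely the coordinate $tr(ax)y$: a direct computation shows its symplectic form has rank $2$ (its radical is $\{(x,0):tr(ax)=0\}$, of codimension $2$, using $a\neq 0$), so this component contributes a Walsh value of magnitude $2^{m-1}=2^n$, whence $\mathcal{L}(T)\geq 2^n$. For the upper bound I would use that $T$ is APN, by definition of an extension, in dimension $n+1\geq 3$, so every component is a nonaffine quadratic function, i.e. $r_v\geq 1$ for all nonzero indices, giving $\max|W_T|\leq 2^{m-1}=2^n$. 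Combining the two bounds yields $\mathcal{L}(T)=2^n$.

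For the converse, start from a quadratic APN $G$ on $\F_{2^{n+1}}$ with $\mathcal{L}(G)=2^n=2^{m-1}$. Then some nonzero component $g=tr(v_0G)$ attains $|W_G(u_0,v_0)|=2^{m-1}$, forcing its symplectic form to have rank exactly $2$. The first step is to normalize $G$ up to EA-equivalence — which preserves being quadratic, APN, and the linearity — so that $g$ becomes the last output coordinate and is brought, via a linear change of the input and the addition of a suitable affine map (both legitimate EA operations, the latter absorbing the linear and constant part of the rank-$2$ form $g$), into the canonical shape $z_1z_2$. Choosing the splitting $\F_{2^{n+1}}\cong\F_{2^n}\times\ftwo$, $(x,y)$, so that $z_2=y$ and $z_1$ is a functional of $x$ alone — hence $z_1=tr(ax)$ for a unique $a\neq 0$ — turns $G$ into $(P(x,y),\ tr(ax)y)$ with $P$ quadratic. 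Because $y^2=y$, I would expand $P(x,y)=F(x)+(L(x)+b)y$ with $F(x):=P(x,0)$ quadratic and $L$ linearized, and remove the degree-one term $by$ by one more affine addition, reaching exactly $T=(F(x)+L(x)y,\ tr(ax)y)$.

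It remains to check that the extracted $F$ is APN; this is the crux. Here I would invoke the second-order derivative characterization: a quadratic function is APN iff its constant second-order derivative $D_eD_{e'}(\cdot)$ is nonzero for every pair of linearly independent directions. A short computation gives the second-order derivative of $T$ in directions $(e,\epsilon),(e',\epsilon')$ as the pair $\big(D_eD_{e'}F+L(e)\epsilon'+L(e')\epsilon,\ tr(ae)\epsilon'+tr(ae')\epsilon\big)$. Specializing to $\epsilon=\epsilon'=0$ and linearly independent $e,e'\in\F_{2^n}$, this collapses to $(D_eD_{e'}F,0)$, which is nonzero since $T$ is APN and $(e,0),(e',0)$ are independent; hence $D_eD_{e'}F\neq 0$ and $F$ is APN. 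Thus $T$ is a $(0,L,a)$-extension of the quadratic APN function $F$, and $G$ is EA-equivalent to it. I expect the delicate point to be the bookkeeping of the EA-normalization — guaranteeing that the rank-$2$ component can be put in the form $tr(ax)y$ with $a\neq 0$ and without residual terms — rather than the APN verification, which the second-order derivative identity renders essentially immediate.
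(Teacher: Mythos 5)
This proposition is quoted in the paper from \cite{DBLP:journals/corr/abs-2108-13280} without any proof, so there is nothing internal to compare against; judged on its own, your argument is correct. Both halves are sound: the Walsh-rank computation for the component $tr(ax)y$ together with the absence of affine components of an APN function in dimension at least $3$ gives the value $2^n$, and the EA-normalization of a rank-$2$ component to $tr(ax)y$ followed by the expansion $P(x,y)=F(x)+(L(x)+b)y$ and the second-order-derivative criterion correctly recovers a quadratic APN $F$ in dimension $n$. This is essentially the argument of the cited reference.
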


An example of a 0-extendable function is the APN function $x \mapsto x^3$ over $\F_{2^5}$ with its $(0,X^{16}+X,1)$-extension belonging to the EA-equivalence class corresponding to the quadratic APN function in dimension $n=6$ with linearity 32, i.e., function no. 2.6 of~\cite{DBLP:journals/amco/EdelP09} (see~\cite[Section 5.3]{DBLP:journals/corr/abs-2108-13280}). A similar example can be given for the APN function $x \mapsto x^5$ over $\F_{2^5}$. Note that the fact that the quadratic APN monomial functions in dimension $n=5$ are 0-extendable was first observed in~\cite{DBLP:journals/iacr/KalginI20a} (although the authors did not name those functions 0-extendable). Besides that, up to EA-equivalence, four of the quadratic APN functions in dimension $n=7$ are 0-extendable~\cite{DBLP:journals/corr/abs-2108-13280}.

A natural question that arises from those examples is whether some other quadratic APN functions (possibly giving an infinite class) in larger dimension $n$ are 0-extendable, that is whether they give rise to quadratic APN functions with maximum possible linearity. The two examples in dimension $n=5$ above belong to the main known class of quadratic APN functions, that of \emph{Gold functions} $x \mapsto x^{2^t+1}$, with $\gcd(t,n)=1$, see~\cite{DBLP:journals/tit/Gold68,DBLP:conf/eurocrypt/Nyberg93}. The subclass corresponding to $t=1$ (i.e. the class of so-called {\em cube functions}) has been modified into a class of quadratic APN functions in~\cite{DBLP:journals/ffa/BudaghyanCL09} by the addition of a Boolean function, giving the infinite APN family $x \mapsto x^3 + tr(x^9)$. For $n\geq 7$, these {\em switched cube functions} are CCZ-inequivalent to any quadratic monomial function (and EA-equivalent to $x \mapsto x^3$ for $n=5$). 

In the present paper, we answer the question of the 0-extendability of these two main classes negatively:
\begin{theorem}
\label{thm:gold}
Let $n, t \in \mathbb{N}^*$ with $n >5$ and $\gcd(t,n)=1$. Let $F \colon \F_{2^n} \rightarrow \F_{2^n}, x \mapsto x^{2^t+1}$. Then, $F$ is not 0-extendable.
\end{theorem}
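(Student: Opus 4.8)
The plan is to first translate APN-ness of the extension $T$ into an explicit condition on the pair $(L,a)$, and then to show that this condition is arithmetically infeasible once $n>5$. Writing $m=2^t+1$ and $F(x)=x^m$, the derivative of $F$ at $b\neq 0$ is the affine map $D_bF(x)=\ell_b(x)+b^{m}$ with linear part $\ell_b(x)=bx^{2^t}+b^{2^t}x$; since $\gcd(t,n)=1$ one checks that $\ker\ell_b=\{0,b\}$ and $\mathrm{Im}\,\ell_b=W_b\coloneqq\{z:tr(b^{-m}z)=0\}$, a hyperplane. Computing $D_{(b,c)}T$ directly, I would observe that every derivative of $T$ is an affine function of $(x,y)$, so that $T$ is APN if and only if the linear part of each nonzero derivative has a one-dimensional kernel. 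I would then dispatch the cases: for $c=0$ and $tr(ab)=1$ the kernel is forced to be $\{(0,0),(b,0)\}$ and there is nothing to prove; for $c=0$ and $tr(ab)=0$ two-to-one-ness is equivalent to $L(b)\notin W_b$, i.e.\ $tr(b^{-m}L(b))=1$; and for $c=1$ and any $b$ (including $b=0$, which recovers the derivative in direction $(0,1)$) the condition becomes $\ker(L+\ell_b)\cap\{x:tr(ax)=0\}=\{0\}$.

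The crucial simplification is that all of these collapse to a single statement. Reading the $c=1$ condition over all $b$ simultaneously, there is no nonzero $x$ with $tr(ax)=0$ and $L(x)=\ell_b(x)$ for some $b$; since $\{\ell_b(x):b\in\F_{2^n}\}=\{bx^{2^t}+b^{2^t}x:b\}=W_x$, this says exactly $L(x)\notin W_x$, i.e.\ $tr(x^{-m}L(x))=1$, for every nonzero $x$ in the hyperplane, which is also the $c=0$ condition. I would therefore record the clean equivalence: the $(0,L,a)$-extension $T$ of $F$ is APN if and only if
\begin{equation*}
tr\big(x^{-m}L(x)\big)=1 \quad\text{for all } x\neq 0 \text{ with } tr(ax)=0. \tag{$\star$}
\end{equation*}
A scaling $x\mapsto a^{-1}x$ (replacing $L$ by another linearized polynomial and $a$ by $1$) lets me assume $a=1$, so the whole problem becomes: show that for $n>5$ no linearized $L$ makes $tr(x^{-m}L(x))=1$ hold on all of $H^*\coloneqq\{x:tr(x)=0\}\setminus\{0\}$.

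To attack $(\star)$ I would exploit that the constraint lives on an entire hyperplane. Writing $L(x)=\sum_{j=0}^{n-1}c_jx^{2^j}$, the left-hand side equals $\sum_j tr(c_j x^{e_j})$ with $e_j\equiv 2^j-m \pmod{2^n-1}$, an $\mathbb{F}_2$-linear expression in the coefficients $c_j$. Hence $(\star)$ is a system of $2^{n-1}-1$ affine $\mathbb{F}_2$-conditions, solvable if and only if the all-ones vector lies in the span of the evaluation maps $x\mapsto tr(c\,x^{e_j})|_{H^*}$; equivalently, it is unsolvable exactly when there is an odd-size set $A\subseteq H^*$ with $\sum_{x\in A}x^{e_j}=0$ for every $j$. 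The main tool is the subspace power-sum lemma $\sum_{x\in V}x^e=0$ whenever $\wt(e)<\dim V$: applied to $V=H$ it annihilates every term with $\wt(e_j)\le n-2$, and a short computation of the binary expansions of $e_j=(2^j-2^t-1)\bmod(2^n-1)$ shows that the only indices with $\wt(e_j)=n-1$ are $j=0$ and $j=t$, whose residual sums are $\big(\sum_{x\in H^*}x^{-1}\big)^{2^t}$ and $\sum_{x\in H^*}x^{-1}$.

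The decisive quantity is therefore $S\coloneqq\sum_{x\in H^*}x^{-1}$, and substituting $y=x^{-1}$ and expanding the trace gives $S=1$. This already shows that the naive certificate $A=H^*$ fails, and it simultaneously yields the one unconditional linear relation $tr(c_0+c_t)=1$ that $(\star)$ forces; note this relation is consistent, matching the fact that the cube and quintic functions are $0$-extendable for $n=5$. The main obstacle, and the heart of the argument, is to pass beyond this single relation: I expect to extract a controlled family of further $\mathbb{F}_2$-linear consequences of $(\star)$ by pairing it against monomials $x^s$ and reapplying the power-sum lemma, and then to produce an explicit odd-weight annihilator $A$ — a modification of $H^*$ by an even set correcting only the $j\in\{0,t\}$ deficits. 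Establishing that such a certificate exists precisely when $n>5$ is where the arithmetic of the Gold exponent $m=2^t+1$ and the threshold $n=5$ genuinely enter, and this is the step I expect to be the most delicate.
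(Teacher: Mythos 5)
Your setup is sound and, after unwinding, coincides with the paper's: the condition $(\star)$ you derive from the derivatives of $T$ is exactly the ortho-derivative characterization (Proposition~\ref{prop:ortho_characterization}) specialized to the Gold function, for which $\pi_F(x)=x^{2^n-2^t-2}=x^{-(2^t+1)}$, and the reduction to $a=1$ by rescaling $x$ is also what the paper does. Your dual reformulation --- $(\star)$ is unsolvable in $L$ if and only if there is an odd-size set $A$ inside the punctured hyperplane $\{x\neq 0: tr(x)=0\}$ with $\sum_{x\in A}x^{e_j}=0$ for all $j$ --- is correct linear algebra, and the two computations you carry out ($\sum_{x\in H^*}x^{-1}=1$, and $\wt(e_j)=n-1$ exactly for $j\in\{0,t\}$) both check out, yielding the single relation $tr(\ell_0+\ell_t)=1$.

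The gap is that the proof stops there. That relation is consistent, not contradictory, and nothing in your argument so far uses $n>5$ or distinguishes $n=7$ from $n=5$ (where the conclusion is false). Everything after ``I expect to extract \dots and then to produce an explicit odd-weight annihilator $A$'' is an unexecuted plan, and it is precisely the part that carries the entire theorem. In the paper this step is the bulk of the work: one extracts, for the exponents $r=2^s-(2^t+1)$ and $r=2^s-1$, the full $\F_{2^n}$-valued coefficient identities $v_r=0$ (each of your $\F_2$-pairings against a subset $A$ is only a one-bit shadow of these), which requires the 2-weight Lemmas~\ref{lem:gold_exponents2} and~\ref{lem:gold_exponents3} to control which triples $(i,j,k)$ contribute; this forces $\ell_s=\ell_{s-1}$ along long runs of $s$, reduces $L$ modulo the harmless summands to at most four free coefficients, and only then do a few further choices of $r$ produce incompatible equations. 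Moreover the argument genuinely splits into the cases $t=1$, $1<t<\frac n2-1$ and $t=\frac{n-1}2$, the last requiring extra relations because $k=t+1$ also contributes there. Your proposal gives no construction of the certificate $A$ and no mechanism by which the threshold $n>5$ would enter; as written it is a correct reformulation of the problem, not a proof.
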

\begin{theorem}\label{th2}
\label{thm:trace}
Let $n \in \mathbb{N}^*$ with $n >5$ and let $\mu \colon \F_{2^n} \rightarrow \F_2$ be a quadratic Boolean function such that $F \colon \F_{2^n} \rightarrow \F_{2^n}, x \mapsto x^3 + \mu(x)$ is APN. Then, $F$ is not 0-extendable.
\end{theorem}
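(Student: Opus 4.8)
My plan is to exploit that a $(0,L,a)$-extension $T$ is a quadratic function, so that its APN property is governed entirely by its associated symmetric bilinear form. Writing $B_F(x,b)=F(x)+F(x+b)+F(b)+F(0)$ and computing the bilinear form of $T$ directly from Definition~\ref{def:0extendable}, one finds
\[
B_T\big((x,y),(b,c)\big)=\big(\,B_F(x,b)+yL(b)+cL(x),\ y\,tr(ab)+c\,tr(ax)\,\big).
\]
Thus $T$ is APN if and only if, for every $(b,c)\neq(0,0)$, the only pairs $(x,y)$ with $B_T((x,y),(b,c))=(0,0)$ are $(0,0)$ and $(b,c)$. I would split the analysis according to $c\in\{0,1\}$ and read off the resulting constraints on the pair $(L,a)$.

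The case $c=0$, $b\neq0$ is the cleanest source of a necessary condition. Here the second coordinate reads $y\,tr(ab)=0$: when $tr(ab)=1$ it forces $y=0$ and the condition reduces to APN-ness of $F$, which holds by hypothesis; when $tr(ab)=0$ the value $y=1$ is admissible, and ruling out extra kernel elements is exactly the requirement that $B_F(x,b)=L(b)$ have no solution, i.e. $L(b)\notin\mathrm{Im}\,B_F(\cdot,b)$. For $F(x)=x^3+\mu(x)$ I would substitute $B_F(x,b)=x^2b+xb^2+B_\mu(x,b)$, where $B_\mu$ is the symmetric bilinear form of $\mu$, written as $B_\mu(x,b)=tr(x\,\Gamma(b))$ for an $\F_2$-linear $\Gamma$. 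Since $F$ is APN the map $B_F(\cdot,b)$ has kernel $\{0,b\}$, so its image is a hyperplane $\{w:tr(\lambda_b w)=0\}$, and the condition becomes the trace identity $tr(\lambda_b L(b))=1$ for all $b\neq0$ with $tr(ab)=0$; for $\mu=0$ this is simply $tr(L(b)/b^3)=1$ on that hyperplane.

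The case $c=1$ is where the real work lies. After substituting $y=tr(ax)$ when $tr(ab)=1$, and treating $y\in\{0,1\}$ separately when $tr(ab)=0$, the requirement is that certain $\F_2$-linear maps in $x$ have prescribed one-dimensional kernels matching $x\in\{0,b\}$. The direction $(b,c)=(0,1)$ already forces $\ker L\cap\{x:tr(ax)=0\}=\{0\}$, hence $\dim_{\F_2}\ker L\le1$, and, if it equals $1$, that its nonzero kernel element $k$ satisfies $tr(ak)=1$. I would feed this near-bijectivity of $L$ back into the trace identity from the $c=0$ case.

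The contradiction for $n>5$, which is the main obstacle, comes from showing that these constraints are jointly over-determined. The identity $tr(\lambda_b L(b))=1$ forces a Boolean function built from the few Frobenius coefficients of $L$ to be constant on a hyperplane of $b$'s; counting the admissible coefficients (or comparing algebraic degrees of the relevant univariate polynomials modulo $X^{2^n}-X$) shows this is possible only in small dimension, and the known extension of $x^3$ over $\F_{2^5}$ by $L=X^{16}+X$ is exactly the boundary case. The delicate point is that the quadratic perturbation $\mu$ alters the image hyperplanes $\{w:tr(\lambda_b w)=0\}$ and hence the identity; I would bound its effect by noting that $B_\mu$ contributes only the $\F_2$-valued, low-rank term $tr(x\,\Gamma(b))$, so it cannot repair the discrepancy that appears once $n>5$. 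Making this last estimate uniform over all admissible $(L,a,\mu)$ is the step I expect to require the most care.
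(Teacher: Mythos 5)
Your setup is fine as far as it goes: in the case $c=0$, $b\neq 0$, $tr(ab)=0$, the condition $L(b)\notin\mathrm{Im}\,B_F(\cdot,b)$ is exactly $tr(\pi_F(b)L(b))=1$ with $\pi_F(b)=\lambda_b$ the ortho-derivative, so you have re-derived the characterization that the paper takes as its starting point (Proposition~\ref{prop:ortho_characterization}, quoted from prior work). The genuine gap is everything after that. The whole content of the theorem is that the identity $tr(\pi_F(x)L(x))=1$ on $\{x\neq 0: tr(ax)=0\}$ is unsatisfiable for $n>5$, and your argument for this consists of the assertion that ``counting the admissible coefficients\dots shows this is possible only in small dimension.'' No such count is given, and no soft count can work: the same identity \emph{is} satisfiable for $n=5$ with $L=X^{16}+X$, so whatever obstruction you invoke must be quantitatively sensitive to $n$, and your proposal never identifies it. The paper's route is to expand $(tr(x^{2^n-4})+1)(tr(ax)+1)(tr(x^{2^n-4}L(x))+1)$ as a univariate polynomial, extract the coefficients of $x^r$ for the specific exponents $r\in\{2^m-9,\,2^m-3,\,2^m-5,\,19\}$ via four lemmas on $2$-weights of binary expansions, normalize $L$ by adding $b\,tr(ax)+tr(cx)+dx^2+d^2x$ (which leaves the relevant Boolean function invariant), split into cases on $tr(\ell_3/a^8+\ell_4/a^{16})$ and $tr(\ell_4/a^{16}+\ell_5/a^{32})$, and only then reach a contradiction. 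None of this (or any substitute for it) appears in your sketch; the $c=1$ analysis you add yields only the weak condition $\dim\ker L\le 1$, which does not feed usefully into the main identity.

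Your handling of the perturbation $\mu$ is also not yet an argument. You propose to ``bound its effect'' on the image hyperplanes, but the clean resolution --- and the one the paper uses --- is that $\pi_F(b)=b^{-3}$ whenever $tr(b^{-3})=0$ \emph{regardless of} $\mu$, because $tr(b^{-3}\mu_b(x))=tr(b^{-3})\,tr(\mu_b(x))$; hence it suffices to show that already the restriction of the required identity to $\{b: tr(b^{-3})=0\}$ is unsatisfiable, and $\mu$ drops out of the problem entirely. Without that observation you would have to control the ortho-derivative on the complementary set (the function $\zeta$), which is not explicit for general $\mu$. Finally, note that the paper disposes of $n\in\{7,9\}$ by citing exhaustive computation and only argues analytically for odd $n\geq 11$; a uniform argument of the kind you sketch would also have to recover those cases.
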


\begin{remark}
In~\cite{DBLP:journals/amco/EdelP09}, Edel and Pott studied a more general notion of switching, i.e., instead of adding a Boolean function $\mu \colon \F_{2^n} \rightarrow \F_2$ to an APN function over $\F_{2^n}$, they allowed to add $z \cdot \mu$ for an arbitrary non-zero constant $z \in \F_{2^n}$. We note that our Theorem~\ref{thm:trace} also covers this more general notion of switching. Indeed, for $n$ being odd, having $F \colon \F_{2^n} \rightarrow \F_{2^n}, x \mapsto x^3 + z \cdot \mu(x)$, multiplying with $z^{-1}$ yields $x \mapsto z^{-1}x^3 + \mu(x) =
(z^{-3^{-1}}x)^3 + \mu(x)$, where the last equality holds because $x \mapsto x^3$ is a permutation. By multiplying $x$ with $z^{3^{-1}}$, one observes that $F$ is EA-equivalent to $x^3 + \mu'(x)$ with $\mu'(x) \coloneqq
\mu(z^{3^{-1}}x)$. 
\end{remark}

To prove those results, we will use the characterization of 0-extendable functions given in Proposition~\ref{prop:ortho_characterization} below. For that, we need the notion of the ortho-derivative of a quadratic APN function.
\begin{definition}[Ortho-derivative~\cite{DBLP:journals/corr/abs-2103-00078}]
Let $F \colon \F_{2^n} \rightarrow \F_{2^n}$ be a quadratic APN function. The \emph{ortho-derivative} of $F$ is defined as the unique function $\pi_F \colon \F_{2^n} \rightarrow \F_{2^n}$ with $\pi_F(0) = 0$ such that, for all non-zero $a \in \F_{2^n}$, we have $\pi_F(a) \neq 0$ and
\begin{align*} \forall x \in \F_{2^n} \colon tr(\pi_F(a) B_{a}(x)) = 0\;,\end{align*}
where $B_{a} \colon \F_{2^n} \rightarrow \F_{2^n}$ is the function $x \mapsto F(x) + F(x+a) + F(a) + F(0)$.
\end{definition}

\begin{proposition}[\cite{DBLP:journals/corr/abs-2108-13280}]
\label{prop:ortho_characterization}
Let $n \geq 3$. A quadratic APN function $F \colon \F_{2^n} \rightarrow \F_{2^n}$ is 0-extendable if and only if there exist a linearized polynomial $L \in \F_{2^n}[X]$ and a non-zero element $a \in \F_{2^n}$ such that $tr(\pi_F(x)L(x)) = 1$ for all non-zero $x \in \F_{2^n}$ with $tr(ax) = 0$.
\end{proposition}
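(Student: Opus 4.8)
The plan is to exploit the fact that $T$ is itself a quadratic function on $\F_{2^n} \times \ftwo$ and to characterise its APN-ness through second-order derivatives. Recall that a quadratic function $Q$ is APN if and only if $D_v D_u Q \neq 0$ for every pair of $\F_2$-linearly independent vectors $u, v$: for quadratic $Q$ the derivative $D_u Q$ is affine, $u$ always lies in the kernel of its linear part, and $2$-to-$1$-ness of $D_u Q$ is equivalent to that kernel being exactly $\langle u \rangle$, i.e.\ to $D_v D_u Q \neq 0$ for all $v \notin \langle u \rangle$. So the first step is to compute the (constant) second-order derivative $D_{(\beta,\delta)} D_{(\alpha,\epsilon)} T$, which is precisely the symmetric bilinear (polar) form associated with $T$.

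Writing $T = (f,h)$ with $f(x,y) = F(x) + y L(x)$ and $h(x,y) = y\, tr(ax)$, a direct computation of the polar forms of the three quadratic summands $F(x)$, $yL(x)$ and $y\,tr(ax)$ yields
\begin{equation*}
D_{(\beta,\delta)} D_{(\alpha,\epsilon)} T = \big(\, B_\alpha(\beta) + \epsilon L(\beta) + \delta L(\alpha),\ \epsilon\, tr(a\beta) + \delta\, tr(a\alpha)\,\big),
\end{equation*}
where $B_\alpha(\beta) = F(\alpha+\beta)+F(\alpha)+F(\beta)+F(0)$ is the symmetric bilinear form of $F$ (so $B_\alpha(\beta) = B_\beta(\alpha)$ and $B_\gamma(\gamma) = 0$). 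The second step is a case analysis over the four values of $(\epsilon,\delta) \in \ftwo^2$, keeping in mind that $(\alpha,\epsilon)$ and $(\beta,\delta)$ must be linearly independent.

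For $(\epsilon,\delta) = (0,0)$ the derivative equals $(B_\alpha(\beta), 0)$, so its non-vanishing for all linearly independent $\alpha,\beta$ is exactly the statement that $F$ is APN, which holds by hypothesis. For $(\epsilon,\delta) = (1,0)$ independence forces $\beta \neq 0$ with $\alpha$ arbitrary, and the derivative is $(B_\beta(\alpha) + L(\beta), tr(a\beta))$; this is automatically non-zero when $tr(a\beta)=1$, and when $tr(a\beta)=0$ it is non-zero for all $\alpha$ precisely when $L(\beta) \notin \mathrm{Im}(B_\beta)$, since $B_\beta(\alpha)$ sweeps out $\mathrm{Im}(B_\beta)$ as $\alpha$ ranges over $\F_{2^n}$. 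The case $(0,1)$ is symmetric, and the case $(1,1)$ reduces to the same analysis after substituting $\gamma = \alpha + \beta \neq 0$ and using $B_\alpha(\beta) = B_\gamma(\alpha)$, giving the requirement $L(\gamma) \notin \mathrm{Im}(B_\gamma)$ whenever $tr(a\gamma)=0$. Finally, since $F$ is APN the kernel of $B_\gamma$ is $\{0,\gamma\}$, so $\mathrm{Im}(B_\gamma)$ is a hyperplane, and by definition of the ortho-derivative it equals $\{z \in \F_{2^n} : tr(\pi_F(\gamma) z)=0\}$; hence $L(\gamma) \notin \mathrm{Im}(B_\gamma)$ is equivalent to $tr(\pi_F(\gamma) L(\gamma)) = 1$. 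Collecting the cases shows that, for a fixed pair $(L,a)$, the extension $T$ is APN if and only if $tr(\pi_F(x) L(x)) = 1$ for all non-zero $x$ with $tr(ax)=0$; quantifying over $(L,a)$ gives the claimed equivalence with $0$-extendability.

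The routine but error-prone part is the polar-form computation and the bookkeeping across the four cases; the genuinely important step — the only place the theory enters — is the identification of $\mathrm{Im}(B_\gamma)$ with the trace hyperplane cut out by $\pi_F(\gamma)$, which converts the geometric condition ``$L(\gamma)$ avoids $\mathrm{Im}(B_\gamma)$'' into the arithmetic condition ``$tr(\pi_F(\gamma)L(\gamma)) = 1$''. I expect the main subtlety to be verifying that the four cases collapse to a single condition: one must check that the reindexing in the $(1,1)$ case yields the very same hyperplane-avoidance requirement as the $(1,0)$ and $(0,1)$ cases, so that no extra constraint on $(L,a)$ beyond the stated trace condition is introduced.
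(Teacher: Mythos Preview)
The paper does not actually prove this proposition; it is quoted from \cite{DBLP:journals/corr/abs-2108-13280} and used as a black box, so there is no proof in the present paper to compare your argument against. That said, your argument is correct and self-contained: the polar-form computation is accurate, the four $(\epsilon,\delta)$-cases are exhaustive, the substitution $\gamma=\alpha+\beta$ together with $B_\alpha(\beta)=B_\gamma(\alpha)$ really does collapse the $(1,1)$ case to the same hyperplane-avoidance condition as the $(1,0)$ and $(0,1)$ cases, and the identification $\mathrm{Im}(B_\gamma)=\{z:tr(\pi_F(\gamma)z)=0\}$ follows immediately from the definition of the ortho-derivative combined with the fact that APN-ness of $F$ forces $\ker B_\gamma=\{0,\gamma\}$, hence $\dim\mathrm{Im}(B_\gamma)=n-1$.
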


We will also need a series of lemmas. We shall only outline the proofs of Lemmas \ref{lem:gold_exponents2}, \ref{lem:gold_exponents3}, \ref{l3} and \ref{lem:trace2} in detail. The proofs of the other lemmas work similarly as the one of Lemma~\ref{lem:trace2}. All these lemmas and their consequences, i.e., the derivation of Relations (\ref{eq:second}), (\ref{eq:first}), (\ref{eq:fourth}), and (\ref{eq:fifth}), being rather technical, we have verified  with a computer the validity of each of those relations for some random choices of $L$ and of $a$ with $n\leq 13$.

\section{The Case of Gold Functions}
For a positive integer $j \in \mathbb{N}^*$, we denote by $\wt(j)$ the Hamming weight of the binary expansion of $j$, which is defined as the number of non-zero coefficients $a_i$ when $j$ is represented as $j = \sum_{i \in \mathbb{N}^*} a_i2^{i-1}, a_i \in \{0,1\}$, and we call it below indifferently the \emph{2-weight} or the \emph{weight of the binary expansion  of $j$}. We shall see in the proofs of Theorems \ref{thm:gold} and \ref{th2} that proving non-0-extendability by using Proposition \ref{prop:ortho_characterization} amounts to determining the cases in which some numbers can have 2-weight at most 2 (and addressing the resulting particular cases). For this, in each case, we will need preliminary lemmas.

The proof of Theorem~\ref{thm:gold} will be split into the three cases $t = 1$, $1 < t < \frac{n}{2}-1$, and $t = \frac{n-1}{2}$. To prove the second and third cases, we shall need the following lemma (as described above). 
\begin{lemma}
\label{lem:gold_exponents2}
Let $n,s,t \in \mathbb{N}^*, k \in \mathbb{N}$ with $k <n$, let $ 1< t < \frac{n}{2}$ and $s \in \{ 2,3,\dots,t-1,t+2,t+3,\dots,n-2\}$. Then, the binary expansion of $2^s-(2^t+1) + 2^k + 2^{k+t} \mod (2^n-1)$ has weight strictly greater than 2 if $k \notin \{0, t+1\}$.
\end{lemma}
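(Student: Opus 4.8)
The plan is to collapse the four-term exponent into a single product and then argue purely at the level of binary digit patterns. First I would use the factorization
$$2^k + 2^{k+t} - (2^t+1) = (2^k-1)(2^t+1),$$
which rewrites the quantity of interest as $E \equiv 2^s + (2^k-1)(2^t+1) \pmod{2^n-1}$. Writing $C \equiv (2^k-1)(2^t+1) \pmod{2^n-1}$, the task becomes to pin down the set of set bits of $C$ and then to control how adding the single bit $2^s$ changes the $2$-weight. As a sanity check on the hypotheses, the excluded value $k=0$ gives $C=0$ and hence $E=2^s$ of weight $1$, already showing why it must be ruled out.

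Next I would compute the digits of $C$ by distinguishing three regimes. For $1\le k\le t$ no reduction modulo $2^n-1$ occurs (since $t+k-1<n$), and the set bits of $C$ are exactly $\{0,\dots,k-1\}\cup\{t,\dots,t+k-1\}$, of weight $2k$. For $t<k$ with $k+t\le n-1$ a single borrow occurs and the set bits are $\{0,\dots,t-1\}\cup\{t+1,\dots,k-1\}\cup\{k+t\}$, of weight $k$. Finally, for $t<k$ with $k+t\ge n$ the top bit wraps to position $p:=k+t-n\in\{0,\dots,t-1\}$, landing inside the low run and triggering a carry, so that the set bits become $\{0,\dots,p-1\}\cup\{t,\dots,k-1\}$, of weight $2k-n$. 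Each description follows from the elementary bookkeeping of one borrow and at most one carry, and I would record it as the explicit run structure (the maximal blocks of consecutive ones) of $C$.

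Then I would add $2^s$ and read off the weight. The admissible set $s\in\{2,\dots,t-1\}\cup\{t+2,\dots,n-2\}$ precisely forbids $s\in\{0,1,t,t+1,n-1\}$. If bit $s$ lies in a gap of $C$ the weight simply increases by one; if it lies inside a run, adding $2^s$ clears the suffix of that run from position $s$ upward and sets the single zero just above it, so the new weight equals (old weight) minus (length of the cleared suffix) plus one. (One must also note that for admissible $s\le n-2$ this carry never propagates past bit $n-1$, so no extra wraparound modulo $2^n-1$ intervenes.) In each regime, feeding in $s\ge 2$, $s\ne t,t+1$, together with $k\ne t+1$, I would verify that the resulting weight is at least $3$; the resulting bounds are linear in $s,k,t,n$, and the worst cases are the shortest runs.

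The step I expect to be the genuine obstacle is the wraparound regime $k+t\ge n$, because there $\wt(C)=2k-n$ can be as small as $2$ or $3$, leaving little room before adding $2^s$ could in principle collapse the weight below $3$. The tightest configurations are $n\in\{2t+1,2t+2,2t+3\}$ with $k$ near $n-t$, and these are exactly where the hypotheses must be invoked: when $n=2t+1$ the value $k=t+1$ forces $p=0$ and leaves the single run $\{t,\dots,k-1\}=\{t\}$, so $C=2^t$ and $E\equiv 2^s+2^t$ has weight $2$ --- this is precisely the excluded case and explains the appearance of $t+1$ in the statement. Away from $k=t+1$ one has $k\ge t+2$, so the run $\{t,\dots,k-1\}$ has length at least $2$, and the forbidden positions $s\notin\{t,t+1\}$ guarantee that the carry triggered by $2^s$ can clear at most enough of this run to leave weight $p+(s-t)+1\ge 3$. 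Carefully establishing this final inequality throughout the wraparound regime, and confirming that the boundary values of $n$ do not produce an additional collapse, is the crux of the argument.
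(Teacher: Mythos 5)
Your proposal is correct in substance but follows a genuinely different route from the paper. The paper works directly with $h = 2^s-2^t-1+2^k+2^{k+t}$, splits on $s<t$ versus $s>t+1$, and lower-bounds $\wt(h)$ by reducing modulo well-chosen $2^j-1$ (using $\wt(h \bmod (2^j-1))\le\wt(h)$), which lets it avoid ever writing down the full binary expansion in the messy subcases. You instead factor the constant part as $(2^k-1)(2^t+1)$, determine the complete run structure of $C\equiv(2^k-1)(2^t+1)\pmod{2^n-1}$ in the three regimes $k\le t$, $t<k$ with $k+t\le n-1$, and $k+t\ge n$, and then track the single carry chain caused by adding $2^s$. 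Your regime computations are right: the weights $2k$, $k$ and $2k-n$ and the stated bit sets check out, and you correctly locate where the hypotheses bite --- the wraparound regime, where $k=t+1$ together with $n=2t+1$ collapses $C$ to $2^t$, and where $k\ne t+1$ forces the high run $\{t,\dots,k-1\}$ to have length at least $2$. The verifications you defer do all close: in the wraparound regime the gap case gives weight $p+(k-t)+1\ge 3$, a carry inside the high run gives $p+(s-t)+1\ge 3$ since $s\ge t+2$, and a carry inside the low run gives $k-t+s+1\ge 5$; in the other regimes the analogous bounds use $s\ge 2$, $s\notin\{t,t+1\}$ and $k\ge n-t$, and your observation that the carry never passes bit $n-1$ (since all run tops are at most $n-2$ and $s\le n-2$) rules out a second wraparound. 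What your approach buys is a uniform, structural explanation of \emph{why} the exceptional set is exactly $\{0,t+1\}$, at the cost of a case enumeration over the position of $s$ relative to the runs that is comparable in length to the paper's; a complete write-up would still need to spell out that enumeration, including the boundary situation $k=t$ where the two runs of $C$ merge into one.
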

\begin{proof}
Let us define $h \coloneqq 2^s - 2^t -1 + 2^k + 2^{k+t} \mod (2^n-1)$. We observe that, given any positive integer $j < n$, the number $h_j \coloneqq h \mod (2^j-1)$ satisfies $\wt(h_j) \leq \wt(h)$.  Indeed, the reduction of each power of 2 in the binary expansion of $h$ gives rise to at most one power of 2 in the binary expansion of $h_j$ (and groupings of powers of 2 being possible, this inequality may be strict). Therefore, to prove that the 2-weight of $h$ is strictly greater than 2, it is enough to find some $j < n$ such that $\wt(h_j) \geq 3$. We divide the proof into several cases. In our computations, we will excessively use the identity $2^j = 1 + \sum_{i=0}^{j-1} 2^i$ for $j \in \mathbb{N}^*$.

\paragraph{Case $s < t$.} 
For $k = 1$, we have $h = 2^s + 2^t + 1$. Since $s \notin \{0,t\}$, it follows that $\wt(h) \geq 3$. 

For $1 < k < t-1$, we first consider the case of $s \neq 2$. We consider
\[h_t = h \mod (2^t-1) = 2^s + 2^{k+1}  - 2 = 2^s + \sum_{i=1}^k 2^i,\]
and we can check in each of the two cases $s\leq k$ and $s>k$ that it has 2-weight (respectively equal to $s$ and $k+1$) greater than 2. 
For $s = 2$, we have \begin{align*}h &= 2^2 - 2^t + 2^k + \sum_{i=0}^{k+t-1}2^i = 2^2 + 2^k + \sum_{i=0}^{t-1} 2^i + \sum_{i=t+1}^{k+t-1}2^i \\
&= 2^2 + \sum_{i=0}^{k-1}2^i + \sum_{i=t}^{k+t-1}2^i = 1+2+2^k + \sum_{i=t}^{k+t-1}2^i ,\end{align*}
which has 2-weight at least 3.

For $k=t-1$, we have $h = 2^s-2^{t-1} + 2^{2t-1} - 1 = 2^s - 2^{t-1} + \sum_{i=0}^{2t-2} 2^i = 2^s + \sum_{i=0}^{t-2} 2^i + \sum_{i=t}^{2t-2}2^i$, which has 2-weight strictly greater than $2$ if $t>2$ (since $s < t$). In the case of $t=2$, we have $h = 2^s + 2^2 + 1$, which has 2-weight 3, since $s \notin \{0,t\}$, that is, $s \notin \{0,2\}$.

For $k = t$, we have $h = 2^s + 2^{2t} - 1$, whose binary expansion is $\sum_{i=0}^{s-1}2^i+2^{2t}$, since $s<2t<n$, and has then 2-weight at least $s+1$, and is greater than 2 since $s \geq 2$.

For $t+1 < k < n$, we have $k-1>t$. Then,
\[ h = \begin{cases} 2^s + \sum_{i=0}^{t-1}2^i + \sum_{i=t+1}^{k-1} 2^i + 2^{k+t} & \text{if } k+t < n \\
2^s + \sum_{i=0}^{t-1}2^i + \sum_{i=t+1}^{k-1} 2^i + 2^{k+t-n} & \text{if } k+t \geq n
\end{cases}.\]
In the first case ($k+t < n$), the binary expansion of $h$ is
\[\sum_{i=0}^{s-1}2^i + \sum_{i=t}^{k-1} 2^i + 2^{k+t},\]
which has weight $s+(k-t)+1\geq 3$. In the second case ($k+t  \geq n)$, we have $k+t-n <t$. If $k+t-n = 0$, then $h = 2^s + 2^t + \sum_{i=t+1}^{k-1} 2^i$. If $k+t-n = t-1$, then $h = 2^s + \sum_{i=0}^{t-2}2^i + \sum_{i=t}^{k-1} 2^i = \sum_{i=0}^{s-1}2^i + 2^{t-1} + \sum_{i=t}^{k-1}2^i$. If $k+t-n \notin \{0,t-1\}$, we have $h = 2^s + \sum_{i=0}^{k+t-n-1}2^i + \sum_{i=k+t-n+1}^{t-1}2^i$, which has binary representation $\sum_{i=0}^{s-1}2^i + \sum_{i=k+t-n}^{t-1}2^i$ if $s\leq k+t-n$, binary representation $\sum_{i=0}^{k+t-n-1}2^i + 2^t$ if $s = k+t-n+1$, and binary representation $\sum_{i=0}^{k+t-n-1}2^i + \sum_{i=k+t-n+1}^{s-1}2^i + 2^t$ if $s > k+t-n+1$. In all of those cases, the binary representation of $h$ has weight greater than 2.

\paragraph{Case $s > t+1$.}
In this case, we have
\[h = \sum_{i=0}^{t-1} 2^i + \sum_{i=t+1}^{s-1} 2^i + 2^k + 2^{k+t}.\]
Now, since $s \neq n-1$, we can have $\wt(h) \leq 2$ only if $k \in \{0,t+1\}$ or $k+t \in\{0,t+1\}$ (mod $n$). 

For $k+t = 0 \mod n$, we have $h = 2^s-2^t + 2^{n-t}$ and there are three possible cases:\begin{itemize}\item  $s<n-t$, then since $t<s$, $h$ has binary expansion $\sum_{i=t}^{s-1} 2^i + 2^{n-t}$, which has weight $s-t+1\geq 3$;\item $s=n-t$, then $h=2^{s+1}-2^t$ has binary expansion $\sum_{i=t}^s 2^i$, which has weight $s-t+1\geq 3$ as well;\item $s>n-t$, then since $t<n-t$, $h$ has binary expansion $\sum_{i=t}^{n-t-1} 2^i +2^s$, which has weight $n-2t+1$, which is greater than 3 unless $t=\frac{n-1}2$; but since we are in the case $k+t = 0 \mod n$, this latter case means $k = t+1$,
which is excluded by the statement of the lemma.
\end{itemize} 

For $k+t = t+1 \mod n$, i.e., $k=1 \mod n$, we have $h = 2^s + 2^t + 1$, which has 2-weight greater than two since $s \notin \{0,t\}$.
\end{proof}

For handling the case of $t = \frac{n-1}{2}$, we need the following additional lemma.
\begin{lemma}
\label{lem:gold_exponents3}
Let $n,s,t \in \mathbb{N}^*, k \in \mathbb{N}$ with $n \geq 5$ being odd, $k < n$, $ t = \frac{n-1}{2}$ and $s \in \{ 1,2,\dots,t, t+3,t+4,\dots,n-2\}$. Then, the binary expansion of $2^s-1 + 2^k + 2^{k+t} \mod (2^n-1)$ has weight at most 2 if and only if $k \in \{0, \frac{n+1}{2}\}$. 
\end{lemma}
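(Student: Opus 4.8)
The plan is to follow the strategy of Lemma~\ref{lem:gold_exponents2} and supplement it with the (easy) reverse implication. Set
\[ h := 2^s - 1 + 2^k + 2^{k+t} \bmod (2^n-1), \]
and recall from the proof of Lemma~\ref{lem:gold_exponents2} the reduction bound $\wt(h \bmod (2^j-1)) \le \wt(h)$, valid for every positive integer $j < n$. It helps to picture $h$ as a cyclic binary word of length $n$ consisting of the all-ones block $2^s-1$ on positions $0,1,\dots,s-1$ together with two ``bumps'' $2^k$ and $2^{k+t}$ on positions $k$ and $(k+t)\bmod n$; the whole analysis amounts to tracking how these bumps interact with the block through carries.

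For the ``if'' direction I would simply evaluate $h$ at the two exceptional values of $k$. If $k=0$, the bump $2^k=1$ cancels the $-1$, so that $h = 2^s + 2^t$, which has $2$-weight $2$ (or $1$ when $s=t$). If $k = t+1$, then $k+t = 2t+1 = n$, so $2^{k+t} = 2^n \equiv 1$ cancels the $-1$ and $h = 2^s + 2^{t+1}$; since $s \notin \{t+1,t+2\}$ we have $s \neq t+1$, whence $\wt(h)=2$. In both cases $\wt(h) \le 2$, as required.

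For the ``only if'' direction I would prove the contrapositive: if $k \notin \{0,t+1\}$ then $\wt(h) \ge 3$. The guiding principle is that the $2$-weight can drop below $3$ only through a complete carry-collapse of the block $2^s-1$ into the single bit $2^s$, which is triggered precisely when one of the bumps lands on position $0$---that is, when $2^k \equiv 1$ ($k=0$) or $2^{k+t}\equiv 1$ ($k+t \equiv 0 \bmod n$, i.e.\ $k=t+1$). For all other $k$ no such collapse occurs, and the block together with the two bumps (which are cyclically well separated, as they differ by $t$) leaves at least three set bits. To turn this into a proof I would split on the ranges $1 \le s \le t$ and $t+3 \le s \le n-2$, and within each on the positions of $k$ and $m:=(k+t)\bmod n$ relative to the block $[0,s-1]$ and to one another, exactly as in Lemma~\ref{lem:gold_exponents2}. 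In sub-cases where both bumps lie outside the block one reads $\wt(h) \ge 3$ off the expansion directly; in sub-cases where a bump falls strictly inside the block but not at position $0$ I would resolve the carries via $2^j = 1 + \sum_{i=0}^{j-1}2^i$, or reduce modulo $2^t-1$ (which identifies the two bump positions, merging them into $2^{(k\bmod t)+1}$) and apply the reduction bound to exhibit weight $\ge 3$.

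The step I expect to be the main obstacle is the boundary bookkeeping in the ``only if'' direction. When a bump sits just inside the block, when the cyclic wraparound of $2^{k+t}$ interferes with the block, or when a reduction such as $h \bmod (2^t-1)$ happens to be too lossy (for instance it collapses to weight $1$ when $s=t$, and to weight $2$ for certain $k$ in the upper range), the carries must be computed carefully and, occasionally, a different modulus or a direct expansion has to be used. Because the claim is an equivalence, the case analysis must be complete enough to certify that $0$ and $t+1$ are the \emph{only} values of $k$ producing weight at most $2$; this exhaustiveness is the delicate part, and it is precisely why the authors record a computer verification for small $n$.
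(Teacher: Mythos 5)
Your ``if'' direction is correct and matches the paper's: for $k=0$ one gets $h=2^s+2^t$, and for $k=t+1$ (using $2^{k+t}=2^{2t+1}=2^n\equiv 1$) one gets $h=2^s+2^{t+1}$, both of weight at most $2$. The gap is in the ``only if'' direction, whose guiding principle is false as stated. It is not true that the weight can drop to $2$ only via a complete collapse triggered by a bump at position $0$: a bump at position $1$ partially collapses the block $\sum_{i=0}^{s-1}2^i$ to $2^s+1$, and if the \emph{other} bump then lands exactly at position $s$, the result $2^{s+1}+1$ has weight $2$ with $k\notin\{0,t+1\}$. Concretely, $k=1$ gives $h=2^s+2^{t+1}+1$ and $k=t+2$ (i.e.\ $k+t\equiv 1\pmod n$) gives $h=2^s+2^{t+2}+1$, and these have weight $2$ precisely when $s=t+1$, resp.\ $s=t+2$. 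The lemma survives only because the hypothesis excludes $s\in\{t+1,t+2\}$ (as well as $s\in\{0,n-1\}$) --- a restriction you invoke in the ``if'' direction but never in the ``only if'' direction, where it is essential.

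The paper's argument is also considerably shorter than the case analysis you anticipate: since $s\notin\{0,n-1\}$, the block together with the two bumps can have weight at most $2$ only if $k\in\{0,1\}$ or $k+t\in\{0,1\}\pmod n$, leaving exactly the four candidates $k\in\{0,1,t+1,t+2\}$; the cases $k=1$ and $k=t+2$ are then dismissed by the explicit computations above using $s\notin\{0,t+1,t+2\}$. No split on the ranges of $s$, and no reduction modulo $2^t-1$, is needed. As written, your plan rests on an incorrect collapse principle and defers the decisive verification to an unexecuted case analysis, so it does not yet constitute a proof.
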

\begin{proof}
Let 
\[ h \coloneqq 2^s -1 + 2^k + 2^{k+t} \mod (2^n-1) = \sum_{i=0}^{s-1} 2^i + 2^k + 2^{k+t}.\]
We first observe that, for $k=0$, we have $h = 2^s + 2^t$ and, for $k = \frac{n+1}{2} = t+1$, we have $h = 2^s + 2^{t+1}$. In both cases, $\wt(h) \leq 2$.

Since $s \notin \{0,n-1\}$, the case of $\wt(h) \leq 2$ can only happen if $k \in \{0,1\}$, or $k+t \in \{0,1\}$ (mod $n$). For $k = 1$, we have $h = 2^s + 2^{t+1} + 1$, which has 2-weight greater than 2 since $s \notin \{0,t+1\}$. The case $k + t = 0 \mod n$ corresponds to the case of $k = \frac{n+1}{2}$. For $k + t = 1 \mod n$, it is $k = t+2$ and thus $h = 2^s + 2^{t+2} +1$, which has 2-weight greater than 2 since $s \notin \{0,t+2\}$.
\end{proof}

We can now prove our first main result.
\begin{proof}[Proof of Theorem~\ref{thm:gold}]
It is well known that the ortho-derivative $\pi_F \colon \F_{2^n} \rightarrow \F_{2^n}$ of $F$ equals the function  $x \mapsto x^{2^n -2^t - 2}$. Indeed, given $a \in \F_{2^n} \setminus \{0\}$, we have $B_{a}(x)= F(x) + F(x+a) + F(a) + F(0)=a x^{2^t}+a^{2^t}x$ and therefore $tr(\beta B_{a}(x))=tr((\beta a)^{2^{-t}}+\beta a^{2^t})x)$. Since $\gcd(t,n)=1$, the expression $(\beta a)^{2^{-t}}+\beta a^{2^t}$ equals 0 if and only if $\beta \in \{0, a^{2^n -2^t - 2}\}$. According to Proposition \ref{prop:ortho_characterization} we need to show that there does not exist a linearized polynomial $L \in \F_{2^n}[X]$ such that $tr(x^{2^n -2^t - 2}L(x)) = 1$ for all non-zero $x \in \F_{2^n}$ with $tr(x) = 0$ (see Proposition 7 of~\cite{DBLP:journals/corr/abs-2108-13280}).

Let us assume that there exists a linearized polynomial $L = \sum_{j=0}^{n-1}\ell_jX^{2^j} \in \F_{2^n}[X]$, such that for every nonzero $x \in \F_{2^n}$, $tr(x)=0$ implies $tr(x^{2^n -2^t - 2}L(x))=1$, that is, the Boolean function \[f(x) \coloneqq  (tr(x)+1)(tr(x^{2^n-2^t-2}L(x))+1)\] equals the indicator of $\{0\}$, that is, equals $x^{2^n-1}+1$ in $\mathbb{F}_{2^n}[x]/(x^{2^n}+x)$.

For $r\in \{1,\dots ,2^n-2\}$, the coefficient of $x^r$ (supposed to be zero) in the univariate form of $f$ equals:\begin{enumerate}[(i)]
\item the sum $\sum_{(i,j,k) \in S} \ell_j^{2^k}$, where \[S = \{(i,j,k) \mid i,j,k\in \{0,\dots ,n-1\} \text{ and } 2^i+2^{j+k} \equiv r+2^k+2^{k+t} \mod (2^n-1) \},\]
\item plus $1$ if $r=2^i$ for some $i\in \{0,\dots ,n-1\}$,
\item plus the sum  $\sum_{(j,k) \in T}\ell_j^{2^k}$ where \[T=\{(j,k)\mid j,k\in \{0,\dots ,n-1\} \text{ and } 2^{j+k}\equiv r+ 2^k+2^{k+t} \mod (2^n-1)\}.\] 
\end{enumerate}

Note that $2^i+2^{j+k}$ has a binary expansion of weight 2 unless $j+k
\equiv i\mod n$, in which case it has a binary expansion of weight 1. 

Without loss of generality, we can assume that $t < \frac{n}{2}$. The reason is that the function $x \mapsto x^{2^{t'}+1}$ for $t' > \frac{n}{2}$ is  EA-equivalent to a function $x \mapsto x^{2^t+1}$ with $t < \frac{n}{2}$ and the property of being 0-extendable is invariant under EA-equivalence. Moreover, the case $t = \frac{n}{2}$ cannot occur since $n$ must necessarily be odd (if $n>2$ is even, any quadratic APN function $G \colon \F_{2^{n+1}} \rightarrow \F_{2^{n+1}}$ cannot have linearity $2^n$ since it must be almost bent, see~\cite{DBLP:journals/dcc/CarletCZ98}). The remainder of this proof is split into several cases.

\paragraph{Case $t=1$.}
Let $4\leq s\leq n-2$ (assuming $n\geq 6$) and $r=2^s-3=1+2^2+2^3+\dots +2^{s-1}$. Then $r+2^k+2^{k+t} \mod (2^n-1)$ equals:
\begin{align*}2^s &\text{ for } k=0,\\
1+2+2^s &\text{ for } k=1,\\ 1+2^3+2^s &\text{ for } k=2,\\ 1+2^2+\dots +2^{k-1}+2^{k+1}+2^s &\text{ for every } k\in \{3,\dots ,s-2\},\\ 1+2^2+\dots +2^{s-2}+2^{s+1} &\text{ for } k=s-1\\1+2^2+\dots +2^{s-1}+2^k+2^{k+1} &\text{ for every } k\in \{s,\dots ,n-2\}\\2+2^2+\dots +2^{s-1}+2^{n-1} &\text{ for } k=n-1.
\end{align*}
For $s=n-1$, the only case where the situation differs with respect to $s\leq n-2$ is for $k=s-1=n-2$ (for which $r+2^k+2^{k+t}\equiv 2+\dots +2^{n-3} \mod (2^n-1)$) and for $k=n-1$ (for which $r+2^k+2^{k+t}\equiv 2+\dots +2^{n-1} \mod (2^n-1)$).

Since the only case where we obtain an integer of 2-weight at most 2 is the first one, i.e., for $k=0$, we deduce that the coefficient of $x^r$ in the univariate form of $f$ equals $\ell_{s-1}+\ell_s$ (the second term in this sum coming from the case (iii) above). Hence, for the Boolean function $f$ to equal the indicator of $\{0\}$, we must have  $\ell_s=\ell_{s-1}$, for every $s=4,\dots ,n-1$, and therefore, $\ell_s=\ell_3$.

This yields
\begin{align*}L(x)=\ell_0x+\ell_1x^2+\ell_2x^4+\ell_3\sum_{i=3}^{n-1}x^{2^i}=(\ell_0+\ell_3)x+(\ell_1+\ell_3)x^2+(\ell_2+\ell_3)x^4+\ell_3 tr(x)\end{align*}
and  expressing that $tr(x)=0$ implies $tr(x^{2^n-4}L(x))=1$ for every $x\neq 0$ results now in the same property applied to the function $(\ell_0+\ell_3)x+(\ell_1+\ell_3)x^2+(\ell_2+\ell_3)x^4$. In other words, we can without loss of generality assume that $L$ has degree at most 4, that is, assume that $\ell_3=\ell_4=\dots =\ell_{n-1}=0$.

Let us now consider other values of $r$ in order to show the inexistence of such (necessarily nonzero) $L$ for $n\geq 6$: For $r=7$, we get \[\ell_1^2+\ell_0=0\] and for $r=1$, we obtain $\ell_1+\ell_0^{2^{n-1}}+\ell_2^{2^{n-1}}+1+\ell_2=0$, hence \[\ell_2+\ell_2^2=1.\] For $r=3$, we get $\ell_2+\ell_2^2=0$, a contradiction.

\paragraph{Case $1 < t < \frac{n}{2}-1$.}
From Lemma~\ref{lem:gold_exponents2}, it follows that for each $r = 2^s - (2^t+1) \mod (2^n-1)$ with $s \in \{2,3,\dots,t-1,t+2,t+3,\dots,n-2\}$, the binary expansion of $r + 2^k + 2^{k+t} \mod (2^n-1)$ can have weight at most 2 only if $k \in \{0,t+1\}$. For $k = t+1$, we have 
\[r + 2^k + 2^{k+t} \mod (2^n-1) = 2^s + \sum_{i=0}^{t-1}2^i + 2^{2t+1}, \]
which has 2-weight at most 2 only if $2t+1 = n$, i.e., only if $t = \frac{n-1}{2}$. 
Therefore, we only need to consider $k=0$ and we have $r + 2^k + 2^{k+t} \equiv 2^s \mod (2^n-1)$. Hence, the coefficient of $x^r$ in the univariate form of $f$ equals $\ell_{s-1} + \ell_s$. We deduce that, for the Boolean function $f$ to equal the indicator of $\{0\}$, we must have  $\ell_s=\ell_{s-1}$, for every $s \in \{2,3,\dots,t-1,t+2,t+3,\dots,n-2\}$ and therefore, 
\begin{align*}
    L(x) &= \ell_0 x + \ell_1 \sum_{i=1}^{t-1}x^{2^i} + \ell_t x^{2^t} + \ell_{t+1}\sum_{i=t+1}^{n-2} x^{2^i} + \ell_{n-1}x^{2^{n-1}} \\
    &= (\ell_0 + \ell_{t+1})x + (\ell_1+\ell_{t+1})\sum_{i=1}^{t-1}x^{2^i} + (\ell_t + \ell_{t+1}) x^{2^t} + (\ell_{t+1} + \ell_{n-1})x^{2^{n-1}} + \ell_{t+1} tr(x).
\end{align*}
Expressing that $tr(x)=0$ implies $tr(x^{2^n-2^t-2}L(x))=1$ for every $x\neq 0$ results now in the same property applied to the function $(\ell_0 + \ell_{t+1})x + (\ell_1+\ell_{t+1})\sum_{i=1}^{t-1}x^{2^i} + (\ell_t + \ell_{t+1}) x^{2^t} + (\ell_{t+1} + \ell_{n-1})x^{2^{n-1}}$. In other words, we can without loss of generality assume that $L(x)$ is of the form
\[ \ell_0x + \ell_1\sum_{i=1}^{t-1}x^{2^i} + \ell_t x^{2^t} +  \ell_{n-1}x^{2^{n-1}}.\]

Again, let us consider other values of $r$ to show the inexistence of such (necessarily nonzero) $L$: For $r = 7$, we get $\ell_t^{2^t} + \ell_0 = 0$ and for $r=1$, we obtain $\ell_0 + \ell_1^{2^t} + \ell_t^{2^t} = 1$, hence $\ell_1 = 1$. For $r=2^t+1$, we obtain $\ell_1 = 0$, which is a contradiction.

\paragraph{Case $t = \frac{n-1}{2}$.} From Lemma~\ref{lem:gold_exponents2}, it follows that for each $r = 2^s - (2^t+1) \mod (2^n-1)$ with $s \in \{2,3,\dots,t-1,t+2,t+3,\dots,n-2\}$, the binary expansion of $r + 2^k + 2^{k+t} \mod (2^n-1)$ has weight at most 2 only if $k \in \{0,t+1\}$. In that case, we have $r + 2^k + 2^{k+t} \equiv 2^s\mod (2^n-1)$ for $k=0$, and  $r + 2^k + 2^{k+t} \equiv 2^s + 2^{t}\mod (2^n-1)$ for $k=t+1$. Therefore, the coefficient of $x^r$ in the univariate form of $f$ equals $\ell_{s-1} + \ell_{n-1}^{2^{t+1}} + \ell_{s-t-1 \mod n}^{2^{t+1}} + \ell_s$, which yields
\begin{equation}\label{eq:s}\forall s \in \{2,3,\dots,t-1,t+2,t+3,\dots,n-2\} \colon \quad \ell_{s-1} + \ell_{n-1}^{2^{t+1}} + \ell_{t+s}^{2^{t+1}} + \ell_s = 0.\end{equation}

From Lemma~\ref{lem:gold_exponents3}, it follows that for each $r = 2^s - 1$ with $s \in \{1,2,\dots,t,t+3,t+4,\dots,n-2\}$, the binary expansion of $r + 2^k + 2^{k+t} \mod (2^n-1)$ has weight at most 2 if and only if $k \in \{0,t+1\}$. In that case, we have $r + 2^k + 2^{k+t} \equiv 2^s + 2^t\mod (2^n-1)$ for $k=0$ and  $r + 2^k + 2^{k+t} \equiv 2^s + 2^{t+1}\mod (2^n-1)$ for $k=t+1$. Therefore, the coefficient of $x^r$ in the univariate form of $f$ equals $\ell_t + \ell_{s} + \ell_{0}^{2^{t+1}} + \ell_{t+s}^{2^{t+1}}$ for $s\notin \{1, t\}$ and $\ell_t + \ell_{t+1} +  \ell_{0}^{2^{t+1}} + \ell_{n-1}^{2^{t+1}}$  for $s=t$ and $\ell_t + \ell_{1} + \ell_{0}^{2^{t+1}} + \ell_{t+1}^{2^{t+1}} + 1$ for $s=1$, which yields
\begin{equation}\label{eq:s2}\forall s \in \{2,\dots,t-1,t+3,t+4,\dots,n-2\} \colon \quad \ell_t + \ell_{s} + \ell_{0}^{2^{t+1}} + \ell_{t+s}^{2^{t+1}} = 0\end{equation}
and
\begin{equation}\label{eq:st}\ell_t  + \ell_{t+1} + \ell_{0}^{2^{t+1}} + \ell_{n-1}^{2^{t+1}} = 0\end{equation}
and 
\begin{equation}\label{eq:st1}\ell_t + \ell_{1} + \ell_{0}^{2^{t+1}} + \ell_{t+1}^{2^{t+1}} = 1.\end{equation}

Combining Equation~(\ref{eq:s}) and Equation~(\ref{eq:s2}), we obtain
\begin{equation*}
    \forall s \in \{2,\dots,t-1,t+3,t+4,\dots,n-2\} \colon \quad \ell_t + \ell_0^{2^{t+1}} + \ell_{n-1}^{2^{t+1}} + \ell_{s-1} = 0,
\end{equation*}
hence $\ell_{s} = \ell_{t+1}$ for all $s \in \{1,2,\dots,t-2,t+2,t+3,\dots,n-3\}$ (from Equation~(\ref{eq:st})). Further, from Equation~(\ref{eq:s}) and $s=t+2$, we obtain $\ell_{t+1} + \ell_{n-1}^{2^{t+1}} + \ell_{1}^{2^{t+1}} + \ell_{t+2} = 0$, hence $ \ell_{n-1} = \ell_{t+1}$. We observe that Equations~(\ref{eq:st}) and~(\ref{eq:st1}) are in contradiction.
\end{proof}

\section{The Case of Switched Cube Functions}
To show Theorem~\ref{thm:trace}, we also use the characterization of 0-extendable functions by their ortho-derivative. The ortho-derivative $\pi_F$ of the APN function $F$ we are considering is given in the following lemma. 

\begin{lemma}\label{l3}
Let $\mu \colon \F_{2^n} \rightarrow \F_2$ be a quadratic Boolean function such that $F \colon \F_{2^n} \rightarrow \F_{2^n}, x \mapsto x^3 + \mu(x)$ is APN. For the ortho-derivative $\pi_F$ of $F$, we have
\[ \pi_F(x) = (tr(x^{2^n-4})+1)x^{2^n-4} + tr(x^{2^n-4})\zeta(x)\]
for some function $\zeta \colon \F_{2^n} \rightarrow \F_{2^n}$.
\end{lemma}
\begin{proof}
Since $\zeta$ can be an arbitrary function over $\F_{2^n}$, we only need to show that $\pi_F(x) = x^{2^n-4}$ if  $x^{2^n-4}$ has trace zero.
We have $B_{a}(x) = ax^2 + a^2x + \mu_{a}(x)$, where $\mu_a(x) = \mu(x) + \mu(x+a) + \mu(a) + \mu(0)$. Let $a \in \F_{2^n} \setminus \{0\}$ with $tr(a^{2^{n}-4})=0$. Then,  \begin{align*}tr(a^{-3}B_a(x)) = tr(a^{-2}x^2 + a^{-1}x) + tr(a^{-3}\mu_a(x)) = tr(a^{-3}\mu_a(x)) = tr(a^{2^n-4}\mu_a(x)) = 0,\end{align*} where the last equality holds because the image of $\mu_a$ is contained in $\F_2$ and thus $tr(a^{2^n-4}\mu_a(x)) =  tr(a^{2^n-4})tr(\mu_a(x))$. Therefore, $\pi_F(a) = a^{-3} = a^{2^{n}-4}$. For $a = 0$, we trivially have $\pi_F(0) = 0 = 0^{2^n-4}$.
\end{proof}

\begin{remark}
It is shown in~\cite{DBLP:conf/itw/BudaghyanCH11} that for the APN family $F \colon x \mapsto x^3 + tr(x^9)$, we have $\zeta(x) = x^6 + x^{2^{n-1}+1} + x^{2^n - 3\cdot 2^{n-2} -1}$. More precisely, the authors showed that the Boolean function whose support equals the set of pairs $(a,b)$ such that $a\neq 0$ and the equation $F(x)+F(x+a)=b$ has solutions equals $1+tr\big(a^{-3}b+1\big)$  when $tr(a^{-3})=0$ and equals $tr\big(\zeta(a)b\big)+1$ otherwise.
\end{remark}

In~\cite{DBLP:journals/amco/EdelP09}, the authors classified all switched cube APN functions up to dimension $n=9$ (where they considered the more general notion of switching, i.e., functions of the form $F(x)+z \cdot \mu(x)$ for an APN function $F$, an element $z \in \F_{2^n} \setminus \{0\}$, and a Boolean function $\mu$). 
In~\cite{DBLP:journals/corr/abs-2108-13280}, the authors classified all quadratic 0-extendable APN functions in dimension $n=7$ with a computational approach and they also verified that none of the known quadratic APN functions in dimension $n=9$ is 0-extendable. In particular, they verified that APN functions of the form $x \mapsto x^3 + \mu(x)$, for a quadratic Boolean function $\mu$, are not 0-extendable for $n\in \{7,9\}$. In our proof, we can therefore assume that $n$ is an odd integer with $n \geq 11$ (we recall that a 0-extendable function can only exist in odd dimension).

To show that a switched cube function $F$ is not 0-extendable, we need to show that for any linearized polynomial $L  \in \F_{2^n}[X]$ and any non-zero element $a \in \F_{2^n}$, the Boolean function \[x \mapsto (tr(x^{2^n-4})+1)(tr(ax)+1)(tr(x^{2^n-4}L(x))+1) + tr(x^{2^n-4})(tr(ax)+1)(tr(\zeta(x) L(x))+1),\]
where $\zeta$ is the function as given in Lemma~\ref{l3}, does not equal the indicator of $\{0\}$. For that, it suffices to show that, for any linearized polynomial $L  \in \F_{2^n}[X]$ and any non-zero element $a \in \F_{2^n}$, the Boolean function
\begin{equation}\label{eq:f}f(x) \coloneqq (tr(x^{2^n-4})+1)(tr(ax)+1)(tr(x^{2^n-4}L(x))+1)\end{equation}
does not equal the indicator of $\{0\}$ (if this happens for all such $L$ and $a$,  then  $F$ is clearly not 0-extendable).

{\bf Observation 1}: Let $a \in \F_{2^n} \setminus \{0\}$ and $L(x) = \sum_{j=0}^{n-1}\ell_jX^{2^j} \in \F_{2^n}[X]$. Let $\sum_{r=0}^{2^n-1} u_r x^r$ be the  univariate representation of the function $$g \colon x \mapsto (tr(ax)+1)(tr(x^{2^n-4}L(x))+1).$$ 
Similarly to as we saw already for the case of $a=1$, for $r \in \{1,\dots,2^n-2\}$, the coefficient $u_r$ of $x^r$ equals:
\begin{enumerate}[(i)]
\item the sum $\sum_{(i,j,k) \in S} a^{2^i}\ell_j^{2^k}$, where \[S = \{(i,j,k) \mid i,j,k\in \{0,\dots ,n-1\} \text{ and } 2^i+2^{j+k} \equiv r+2^k+2^{k+1} \mod (2^n-1) \},\]
\item to which we add $a^{2^i}$ if $r=2^i$ for some $i\in \{0,\dots ,n-1\}$,
\item to which we add $\sum_{(j,k) \in T}\ell_j^{2^k}$, where \[T = \{(j,k) \mid j,k\in \{0,\dots ,n-1\} \text{ and }2^{j+k}\equiv r+2^k+2^{k+1}\mod (2^n-1)\}.\]
\end{enumerate}

We then have 
\begin{align*}
f(x) &= tr(x^{2^n-4})g(x) + g(x) = {\Big (}\sum_{s=0}^{n-1}x^{2^{n+s}-2^{s+2}}{\Big )}g(x) + g(x) \\
&= \sum_{r=0}^{2^n-1} \left( \sum_{s=0}^{n-1} u_r x^{r+2^{n+s} - 2^{s+2}}\right) + \sum_{r=0}^{2^n-1}u_rx^r = \sum_{r=0}^{2^n-1} \left( \sum_{s=0}^{n-1} u_{r-2^{s}+2^{s+2}} x^{r}\right) + \sum_{r=0}^{2^n-1}u_rx^r \\
&= \sum_{r=0}^{2^n-1}\left(u_r + \sum_{s=0}^{n-1} u_{r+2^{s}+2^{s+1}}\right) x^r,
\end{align*}where the exponents and indices are taken modulo $2^n-1$, 
so the coefficient $v_r$ of $x^r$ in the univariate representation of $f$ equals
\begin{equation}\label{vr} v_r = u_r + \sum_{s=0}^{n-1} u_{r+2^{s}+2^{s+1}}.\end{equation}

Similarly as we did in the case of the Gold functions, we will derive equations in the coefficients $\ell_i$ of $L$ by determining the coefficients $v_r$ for some well-chosen values of $r \in \{1,\dots,2^n-2\}$. Let us start with studying the coefficients $v_r$ with $r = 2^m-9$ for $5 \leq m \leq n-3$. For this, we need the following lemma.

\begin{lemma}\label{lem:trace2}
Let $n,m \in \mathbb{N}^*, s, k \in \mathbb{N}, s,k <n$ with $n > 9$, and $5 \leq m < n-2$, and $k \notin \{0,1\}$. Then, the binary expansion of $h \coloneqq 2^m-9 + 2^s + 2^{s+1} + 2^k + 2^{k+1} \mod (2^n-1)$ has weight strictly greater than 2. Moreover,
\begin{enumerate}
\item if $k=0$, the weight of $h$ is at most 2 only if $s=1$. In that case, we have $h = 2^{m-1} + 2^{m-1}$.
\item if $k=1$, the weight of $h$ is at most 2 only if $s=0$. In that case, we have $h = 2^{m-1} + 2^{m-1}$.
\end{enumerate}

Moreover, for each $0 \leq k <n$, the binary expansion of $2^m-9 + 2^k + 2^{k+1} \mod (2^n-1)$ has weight strictly greater than 2.
\end{lemma}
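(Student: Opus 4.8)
The plan is to turn all three assertions of the lemma into a finite combinatorial analysis of binary expansions, exactly in the spirit of the proofs of Lemmas~\ref{lem:gold_exponents2} and~\ref{lem:gold_exponents3}. First I would record the binary shape of the base term: since $9 = 2^3 + 2^0$, we have $2^m - 9 = (2^m - 1) - 2^3 = \sum_{i=0}^{m-1} 2^i - 2^3$, so its binary expansion is the block of ones on the positions $\{0,1,2\} \cup \{4,5,\dots,m-1\}$, i.e.\ a full run of ones up to position $m-1$ with a single gap at position $3$. Each summand $2^s + 2^{s+1}$ and $2^k + 2^{k+1}$ is a ``domino'' of two adjacent ones, so $h$ is obtained by superimposing (with carries) two such dominoes on this block and reducing modulo $2^n-1$. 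As in the proof of Lemma~\ref{lem:gold_exponents2}, I would also keep available the reduction principle $\wt(h \bmod (2^j-1)) \le \wt(h)$ for $j < n$, so that to certify $\wt(h) \ge 3$ it suffices to exhibit one convenient modulus on which the reduced weight is already at least $3$.

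I would treat the single-domino claim (the last sentence, with no $s$-term) first, since it is the cleanest and organizes the ideas. Arguing by the position of $k$ relative to the gap at $3$, the top bit $m-1$, and the wraparound at $n$, the cases split transparently: for $k \le 1$ the domino overlaps the low ones and a short carry chain leaves weight $m-2$ or $m-1$; for $k \in \{2,3\}$ one gets $2^m+3$ and $2^m+15$, of weights $3$ and $5$; for $4 \le k \le m-2$ the domino lands inside the run, and removing the gap bit $2^3$ from $2^m - 1 + 2^k + 2^{k+1}$ yields weight $k+1 \ge 5$; the remaining values $m-1 \le k \le n-2$ either extend the top of the run or append fresh high bits and keep the weight at least $4$, with $m < n-2$ preventing overflow; and $k = n-1$ uses $2^n \equiv 1$ to collapse to $2^m - 8$ together with a single top bit, of weight $m - 2 \ge 3$. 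In every subcase the weight is at least $3$.

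For the two-domino statement with $k \notin \{0,1\}$ I would run the analogous but larger case analysis on the joint positions of the two dominoes. The generic situation is that at least one domino sits inside one of the runs of ones or in the high zero region $[m, n-2]$; there the reduction-modulus trick produces a witness of weight $\ge 3$ quickly, so the bulk of the cases is routine. The delicate configurations are those where both dominoes crowd into the low positions near the gap, since overlapping additions can then fire a carry chain that fills the gap at $3$ and runs up to position $m$, threatening to collapse the weight. Here I would track the carries explicitly; the assumption $k \ge 2$ is exactly what forbids the two dominoes from jointly contributing $+9$ (which needs $2^s + 2^k = 3$, i.e.\ $\{s,k\} = \{0,1\}$) and thus from realizing the full cancellation $2^m - 9 + 9 = 2^m$. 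The boundary cases $k \in \{0,1\}$ of the ``moreover'' are then handled by the same bookkeeping: for $k = 0$ only $s = 1$ produces the cancellation, giving $h = 2^{m-1} + 2^{m-1}$, while every other $s$ leaves a residue of weight $\ge 3$, and the case $k = 1$ is symmetric with $s = 0$.

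The step I expect to be the main obstacle is making the two-domino case split genuinely exhaustive while keeping the weight bound tight: I must follow how a domino placed near the bottom can cascade both across the gap at position $3$ and across the top into position $m$, and I must handle the modular wraparound $2^n \equiv 1 \pmod{2^n-1}$ correctly at the upper boundary. The only routes to weight at most $2$ are near-cancellations of the $-9$, so the crux is to isolate precisely the pairs $(s,k)$ that achieve them, confirm they occur only for $k \in \{0,1\}$ at the stated value of $s$, and verify that all neighbouring configurations retain weight $\ge 3$; the reduction principle is what keeps the remaining, generic bulk of the argument short.
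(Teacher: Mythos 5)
Your plan follows essentially the same route as the paper's proof: you write $2^m-9$ as the block of ones on positions $\{0,1,2\}\cup\{4,\dots,m-1\}$, you reuse the reduction principle $\wt(h \bmod (2^j-1))\le \wt(h)$ from the proof of Lemma~\ref{lem:gold_exponents2}, and you correctly isolate the unique dangerous configuration: a joint contribution of $+9$ from the two dominoes requires $3(2^s+2^k)=9$, i.e.\ $\{s,k\}=\{0,1\}$, which gives $h=2^m$ and is exactly what the hypothesis $k\notin\{0,1\}$ (together with the stated values of $s$ in the two boundary cases) rules out. Your treatment of the single-domino statement and of the cases $k\in\{0,1\}$ is consistent with the paper's.

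The one genuine shortfall is the step you flag yourself: for $k\notin\{0,1\}$ the exhaustive two-domino analysis is announced but not executed, and ``the bulk of the cases is routine'' plus ``I would track the carries explicitly'' is a promissory note rather than an argument. The paper closes this gap in three short cases after assuming $k\le s$ without loss of generality (so $k\notin\{0,1\}$ forces $2\le k\le s$). If $s\le m$, then $2^k+2^{k+1}+2^s+2^{s+1}\le 2^{m+1}+2^{m+2}\le 2^{n-2}+2^{n-1}$, so there is no wraparound, and since both dominoes sit at positions $\ge 2$ all carries propagate upward and the low bits $1+2$ of $2^m-9$ survive untouched; hence $\wt(h)\ge 3$. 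If $s=m+1$, the same reasoning leaves only $(m,k)=(n-3,m-1)$ to examine, where $h=2^3+\sum_{i=4}^{m-2}2^i$ has weight $m-4\ge 3$ precisely because $n>9$. If $s>m+1$, reducing modulo $2^s-1$ gives $h_s=2+\sum_{i=3}^{m-1}2^i+2^k+2^{k+1}$, of weight $\ge 3$ since $k\neq 1$. In particular, your worry about carry chains cascading across the gap at position $3$ and over the top bit is defused by the single observation that, once $k\le s$ is imposed and $k\ge 2$, the bits at positions $0$ and $1$ can never be consumed; this is what keeps the case split finite and short, and you would need to supply this (or an equivalent) organizing principle to turn your sketch into a proof.
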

\begin{proof}
Let us first prove the statement on the weight of the binary expansion of $h$. We have:
\[ h = 1 + 2 + 2^2 + \sum_{i=4}^{m-1}2^i + 2^k + 2^{k+1} + 2^s + 2^{s+1} \mod (2^n-1).\]
In the following, we assume that $k \leq s$. The case of $k>s$ can be proven similarly by exchanging the roles of $s$ and $k$. Let us first consider the case of $k \in \{0,1\}$.

If $k = 0$, we have $h = 2 + \sum_{i=3}^{m-1}2^i + 2^s + 2^{s+1} \mod (2^n-1)$, which has 2-weight at least 3 if $s \neq 1$. Indeed, $s = 0$ implies that $h = 1 + 2^2 + \sum_{i=3}^{m-1}2^i $, $s= 2$ implies that $h=2+4+2^m$, $s=3$ implies $h=2+2^4+2^m$, and $s > 3$ implies that the binary expansion of $h$ contains $2+8$ and at least one additional power of 2 (since $m < n-2$). For $s=1$, we have $h = 2^m$.

If $k = 1$, we have $h = 1 + \sum_{i=2}^{m-1}2^i + 2^s + 2^{s+1} \mod (2^n-1)$, which has 2-weight at least 3 since $s \neq 0$ (note that we assumed $k \leq s$). In the remainder of this proof, let us assume that $k \notin \{0,1\}$.

\paragraph{Case $s < n-2$.} In that case, $2^k + 2^{k+1} + 2^{s} + 2^{s+1} \leq 2^{n-2} + 2^{n-1}$. Hence, since $s\geq 2$ and $k \geq 2$ (this is the case as $k \notin\{0,1\}$ and $k \leq s$ by assumption), $2^k + 2^{k+1} + 2^{s} + 2^{s+1}$ can combine only with $2^2 +\sum_{i=4}^{m-1}2^i$ while $1 + 2$ is kept the same, the $2$-weight of $h$ is then at least $3$.

\paragraph{Case $s = n-2$.} If $k \leq n-5$, we obtain $2^k + 2^{k+1} + 2^{s} + 2^{s+1} \leq 2^{n-3} + 2^{n-2} + 2^{n-1}$ and we can use the same argument as above. For $k = n-4$, we have $h = 1 + 2 + 2^2 + \sum_{i=4}^{m-1}2^i + \sum_{i=n-4}^{n-1} 2^i \mod (2^n-1)$, which can have 2-weight at most 2 only if $m=n-3$. Then, $h = \sum_{i=3}^{n-5}2^i$, which has 2-weight at least 3 since $n > 9$. For $k \in \{n-3,n-2\}$, it is straightforward to deduce that the 2-weight of $h$ is greater than 2.

\paragraph{Case $s = n-1$.} In this case, we have $h = \sum_{i=3}^{m-1}2^i + 2^k + 2^{k+1} + 2^{n-1}$, which has 2-weight at least 3.

Finally, the last statement of the lemma is clear, since the weight of the binary expansion of
\[ h' \coloneqq 2^m - 9 + 2^{k} + 2^{k+1} =  1 + 2 + 2^2 + \sum_{i=4}^{m-1}2^i + 2^k + 2^{k+1} \mod (2^n-1)\]
is strictly greater than 2 for all values of $k$.
\end{proof}

Case 1 in Lemma~\ref{lem:trace2} corresponds to Case (i) in Observation 1, with $i=j+k=m-1$ and to Case (iii) with $j+k=m$. Also Case 2 in Lemma~\ref{lem:trace2} corresponds to (i) with $i=j+k=m-1$ and to (iii) with $j+k=m$. Let now $5 \leq m \leq n-3$ and $r = 2^m-9$. From Lemma~\ref{lem:trace2}, we obtain that the coefficient $v_r$ of $x^r$ in the univariate representation of $f$ equals Expression (\ref{vr}) in which the sum in $s$ is in fact reduced to $s\in \{0,1\}$, i.e.,
\[v_r = u_{r+1+2} + u_{r+2+4}.\]

Further, from Lemma~\ref{lem:trace2}, we deduce that $u_{r+2+4}$, which corresponds to Case 1, equals $a^{2^{m-1}}\ell_{m-1} + \ell_m$ (where the last term comes from (iii)) and that $u_{r+1+2}$, which corresponds to Case 2, equals $a^{2^{m-1}} \ell_{m-2}^2 + \ell_{m-1}^2$. Hence, we obtain the relation
\begin{equation}\label{eq:second} v_{2^m-9} =\ell_m + \ell_{m-1}^2 + a^{2^{m-1}}\ell_{m-1} + a^{2^{m-1}}\ell_{m-2}^2.\end{equation}

Let us now study the coefficients $v_r$ with $r = 2^m-3$ for $5 < m \leq n-3$.
\begin{lemma}\label{lem:trace}
Let $n,m \in \mathbb{N}^*, s, k \in \mathbb{N}, s,k <n$ with $n \geq 9$, and $5 < m \leq n-3$, and $k \notin \{0,m-1,m,n-1\}$. Then, the binary expansion of $h \coloneqq 2^m-3 + 2^s + 2^{s+1} + 2^k + 2^{k+1} \mod (2^n-1)$ has weight strictly greater than 2. Moreover,
\begin{enumerate}
\item if $k=0$, the weight of $h$ is at most 2 only if $s=m$ or $s=m-1$. In that case, we have $h = 2^{m+1} + 2^{m+1}$ and $h = 2^{m-1} + 2^{m+1}$, respectively.
\item if $k=m-1$, the weight of $h$ is at most 2 only if $s=0$. In that case, we have $h = 2^{m-1} + 2^{m+1}$.
\item if $k=m$, the weight of $h$ is at most 2 only if $s=0$. In that case, we have $h = 2^{m+1} + 2^{m+1}$.
\item if $k=n-1$, the weight of $h$ is at most 2 only if $s=k=n-1$. In that case, we have $h = 2^{m-1} + 2^{m-1}$.
\end{enumerate}

Moreover, for $0 \leq k <n$, the binary expansion of $h' \coloneqq 2^m-3 + 2^k + 2^{k+1} \mod (2^n-1)$ has weight at most 2 if and only if $k=0$. In that case, $h' = 2^{m-1} + 2^{m-1}$.
\end{lemma}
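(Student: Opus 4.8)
The plan is to follow the template of Lemma~\ref{lem:trace2}. Writing the fixed part in binary as $2^m-3 = 2^0 + \sum_{i=2}^{m-1}2^i$, we get
\[ h = 2^0 + \sum_{i=2}^{m-1} 2^i + 2^s + 2^{s+1} + 2^k + 2^{k+1} \mod (2^n-1), \]
so the contribution of $2^m-3$ is a solid block of bits from position $2$ up to position $m-1$, an isolated bit at position $0$, and a single gap at position $1$. The expression is symmetric under the exchange $s \leftrightarrow k$, which I would record at the outset: it lets me assume $k \leq s$ whenever convenient, and it already explains the pairing of the ``moreover'' cases, since $(k,s)=(0,m-1)$ mirrors $(m-1,0)$ and $(0,m)$ mirrors $(m,0)$ (giving $2^{m-1}+2^{m+1}$ and $2^{m+1}+2^{m+1}$), while $(n-1,n-1)$ is fixed by the swap.

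First I would settle the four exceptional values of $k$, which simultaneously establishes the ``moreover'' claims. For $k=0$ the added bits $2^0+2^1$ trigger a carry cascade through the block $2^0,2^2,2^3,\dots,2^{m-1}$, so that $2^m-3+2^0+2^1 = 2^m$ and hence $h = 2^m + 2^s + 2^{s+1}$; this has weight at most $2$ exactly for $s\in\{m-1,m\}$, giving $2^{m-1}+2^{m+1}$ and $2^{m+1}+2^{m+1}$ respectively. For $k=m-1$ and $k=m$ the bits $2^k+2^{k+1}$ collide with the top of the block and carry up to $2^0+\sum_{i=2}^{m-2}2^i+2^{m+1}$ and $2^0+\sum_{i=2}^{m+1}2^i$; adding $2^s+2^{s+1}$ can then reach weight $\leq 2$ only through the bottom cascade forced by $s=0$, producing $2^{m-1}+2^{m+1}$ and $2^{m+1}+2^{m+1}$. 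For $k=n-1$ I would use $2^n\equiv 2^0 \pmod{2^n-1}$, so that $2^m-3+2^{n-1}+2^n$ reduces to $\sum_{i=1}^{m-1}2^i+2^{n-1}$, and check that adding $2^s+2^{s+1}$ yields weight $\leq 2$ only for $s=n-1$, giving $2^{m-1}+2^{m-1}$.

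For the main claim I would fix $k\notin\{0,m-1,m,n-1\}$ and argue for every $s$. If $s\in\{0,m-1,m,n-1\}$, then reading the ``moreover'' analysis with the roles of $s$ and $k$ exchanged shows that $\wt(h)\leq 2$ would force $k\in\{0,m-1,m,n-1\}$, contrary to hypothesis, so $\wt(h)>2$. Otherwise both $s$ and $k$ lie outside the exceptional set, and I would assume $k\leq s$ and split on the position of $s$ relative to $m$, exactly as in Lemma~\ref{lem:trace2}: for small $s$ the added bits merge only with the block $\sum_{i=2}^{m-1}2^i$ while the isolated bit $2^0$ survives, so $\wt(h)\leq 2$ would require the positions $\geq 2$ to collapse to a single power of $2$, which I would exclude by the carry analysis; the borderline values of $s$ near the top of the block are handled by tracking carries explicitly, ruling out a collapse using $m\leq n-3$ and $n\geq 9$; and for large $s$ I would reduce modulo $2^s-1$ and count the surviving weight. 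The final statement is the most direct: $h'=2^0+\sum_{i=2}^{m-1}2^i+2^k+2^{k+1}$ has weight $\geq 3$ for every $k\neq 0$ and collapses to $2^{m-1}+2^{m-1}$ precisely when $k=0$.

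I expect the main obstacle to be the carry bookkeeping in the boundary region around positions $m-1,m,m+1$, where the top of the $2^m-3$ block meets the added bits, together with the wraparound when $s$ or $k$ is close to $n-1$. Because the single low gap of $2^m-3$ lies at position $1$ rather than at position $3$ as for $2^m-9$, the shifts producing a full collapse differ, which is exactly why four exceptional values of $k$ appear here in place of the two that arise in Lemma~\ref{lem:trace2}.
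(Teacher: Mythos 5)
Your proposal is correct and follows essentially the route the paper intends: the paper gives no separate proof of this lemma, stating only that it "works similarly" to Lemma~\ref{lem:trace2}, and your argument is exactly that template (binary expansion of $2^m-3$ as an isolated bit at position $0$ plus a block from $2$ to $m-1$, explicit carry computations for the exceptional $k$, the $s\leftrightarrow k$ symmetry to reduce to $k\le s$, and reduction modulo $2^j-1$ for the remaining cases). Your explicit computations for the four exceptional values of $k$ and for $h'$ all check out against the stated conclusions, so there is nothing to add.
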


Case 1 in Lemma~\ref{lem:trace} corresponds to Case (i) in Observation 1, which contributes in the coefficient $u_r$ of $x^r$ for the sum $\sum_{(i,j,k) \in S} a^{2^i}\ell_j^{2^k}$, where $$S = \{(i,j,k) \mid i,j,k\in \{0,\dots ,n-1\} \text{ and } 2^i+2^{j+k} \equiv r+2^k+2^{k+1} \mod (2^n-1) \},$$ with $i=j+k=m+1$ and with $i=m-1$, $j+k=m+1$ and with $i=m+1$, $j+k=m-1$; and to Case (iii), which contributes for the sum $\sum_{(j,k) \in T}\ell_j^{2^k}$, where $$T = \{(j,k) \mid j,k\in \{0,\dots ,n-1\} \text{ and }2^{j+k}\equiv r+2^k+2^{k+1}\mod (2^n-1)\},$$ with $j+k=m+2$. 

Case 2 corresponds to (i) with $i=m-1$, $j+k=m+1$ and with $i=m+1$, $j+k=m-1$, and to (ii), which adds $a^{2^i}$ if $r=2^i$.  

Case 3 corresponds to (i) with $i=j+k=m+1$, to (ii) and to (iii) with $j+k=m+2$. Case 4  corresponds to (i) with $i=j+k=m-1$ and to (iii) with $j+k=m$.

Let now $6 \leq m \leq n-3$ and $r = 2^m-3$. From Lemma~\ref{lem:trace}, we obtain that the coefficient $v_r$ of $x^r$ in the univariate representation of $f$ equals Expression (\ref{vr}) in which the sum in $s$ is in fact reduced to $s\in \{0,m-1,m,n-1\}$, i.e.,
\[v_r = u_r + u_{r+1+2} + u_{r+2^{m-1}+2^m} + u_{r+2^m+2^{m+1}} + u_{r+2^{n-1}+1}.\]
Further, from Lemma~\ref{lem:trace}, we deduce that $u_{r+1+2}$, which corresponds to Cases 2 and 3, equals $a^{2^{m-1}}\ell_2^{2^{m-1}} + a^{2^{m+1}}\ell_0^{2^{m-1}} + a^{2^{m+1}}\ell_1^{2^{m}} + a^{2^m} + \ell_2^{2^m}$ (where the last term comes from (iii) and the second last term from (ii)), $u_{r+2^{m-1} + 2^m}$, which corresponds to Case 1, equals $ a^{2^{m-1}}\ell_{m+1} + a^{2^{m+1}}\ell_{m-1}$, $u_{r+2^m+2^{m+1}} $, which corresponds to Case 1, equals $ a^{2^{m+1}}\ell_{m+1} + \ell_{m+2}$ (where the last term comes from (iii)), and $u_{r+2^{n-1}+1}$, which corresponds to Case 4, equals $ a^{2^{m-1}}\ell_m^{2^{n-1}} + \ell_{m+1}^{2^{n-1}}$. Further, $u_r = a^{2^{m-1}}\ell_{m-1} + \ell_m$ (where the last term comes from (iii)).
Hence, we obtain the relation
\begin{equation}
\label{eq:first}
\begin{split} 
v_{2^m-3} &= \ell_{m+2} + \ell_{m+1}^{2^{n-1}} + (a^{2^{m-1}}+a^{2^{m+1}})\ell_{m+1} + a^{2^{m-1}}\ell_m^{2^{n-1}} + \ell_m + (a^{2^{m-1}}+a^{2^{m+1}})\ell_{m-1}  \\
&+ a^{2^{m-1}}\ell_2^{2^{m-1}} + \ell_2^{2^m} + a^{2^{m+1}}\ell_1^{2^{m}} + a^{2^{m+1}}\ell_0^{2^{m-1}} + a^{2^m}.
\end{split}
\end{equation}

Let us continue by studying the coefficients $v_r$ with $r = 2^m-5$ for $3 < m \leq n-3$.
\begin{lemma}\label{lem:trace4}
Let $n,m \in \mathbb{N}^*, s, k \in \mathbb{N}, s,k <n$ with $n \geq  9$, and $3 < m \leq n-3$, and $k \notin \{0,1,m\}$. Then, the binary expansion of $h \coloneqq 2^m-5 + 2^s + 2^{s+1} + 2^k + 2^{k+1} \mod (2^n-1)$ has weight strictly greater than 2. Moreover, 
\begin{enumerate}
\item if $k=0$, the weight of $h$ is at most 2 only if $s=0$ or $s=1$. In that case, we have $h=2^m + 2^0$ and $h=2^m + 2^2$, respectively.
\item if $k=1$, the weight of $h$ is at most 2 only if $s=0$ or $s=m$. In that case, we have $h=2^m + 2^2$ and $h=2^{m+2} + 2^0$, respectively.
\item if $k=m$, the weight of $h$ is at most 2 only if $s=1$. In that case, we have $h = 2^{m+2} + 2^{0}$.
\end{enumerate}

Moreover, for $0 \leq k <n$, the binary expansion of $h' \coloneqq 2^m-5 + 2^k + 2^{k+1} \mod (2^n-1)$ has weight at most 2 if and only if $k=1$. In the case of $k=1$, we have $h' = 2^m + 2^0$.
\end{lemma}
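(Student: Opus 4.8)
The plan is to follow the template of the proof of Lemma~\ref{lem:trace2}, the only structural change coming from the binary expansion of the ``base'' exponent. Since $2^m - 5 = 2^m - 2^2 - 1$, we have
\[ h = 1 + 2 + \sum_{i=3}^{m-1} 2^i + 2^k + 2^{k+1} + 2^s + 2^{s+1} \mod (2^n-1), \]
a number whose binary expansion keeps the two lowest bits $2^0, 2^1$ and has a single hole at position $2$. As $h$ is symmetric under exchanging the pairs $(2^k, 2^{k+1})$ and $(2^s, 2^{s+1})$, I would assume throughout that $k \leq s$ and treat $k > s$ identically. All weight estimates rest on the fact, used already for Lemma~\ref{lem:gold_exponents2}, that $\wt(h \mod (2^j - 1)) \leq \wt(h)$ for every $j < n$.

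First I would clear the small values $k \in \{0,1\}$, which produce the exceptional pairs. For $k = 0$ one finds $h = 2^m - 2 + 2^s + 2^{s+1} = \sum_{i=1}^{m-1}2^i + 2^s + 2^{s+1}$; following the carry that $2^s$ and $2^{s+1}$ trigger inside the block of consecutive ones shows $\wt(h) \geq 3$ except for $s = 0$ and $s = 1$, where $h = 2^m + 2^0$ and $h = 2^m + 2^2$, giving Case~1. For $k = 1$ one gets $h = 2^m + 2^0 + 2^s + 2^{s+1}$ with $s \geq 1$, of weight $\geq 3$ unless $s = m$, where $h = 2^{m+2} + 2^0$. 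The two remaining exceptional pairs listed in Cases~2 and~3, namely $(k,s) = (1,0)$ and $(k,s) = (m,1)$, then arise from the symmetry $k \leftrightarrow s$ as the mirror images of the pairs $(0,1)$ and $(1,m)$ already obtained.

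It remains to establish the main claim, i.e.\ $\wt(h) > 2$ whenever $k \geq 2$ (so that also $s \geq 2$). The decisive observation is that adding $2^k + 2^{k+1} + 2^s + 2^{s+1}$ to the base number creates carries only at positions $\geq 2$, so the protected bits $2^0, 2^1$ survive as long as no reduction modulo $2^n-1$ folds a high bit back onto them. This settles the bulk range $2 \leq k \leq s \leq m$, where the unreduced sum stays below $2^{m+3} \leq 2^n$ and hence $\wt(h) \geq 3$ immediately. For $s \geq m+1$ I would instead pass to $h_s \coloneqq h \mod (2^s-1)$, which replaces the top pair $2^s + 2^{s+1}$ by $1 + 2$ and leaves a number of the shape $\sum_{i=1}^{m-1}2^i + 2^k + 2^{k+1}$ (with an obvious modification when $k \in \{m-1,m,m+1\}$, where the pair meets the hole or the block of ones); a short carry count gives $\wt(h_s) \geq 3$, whence $\wt(h) \geq 3$. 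The final statement on $h'$ is the same argument carried out with a single added pair: for $k \geq 2$ the bits $2^0, 2^1$ persist and force weight $\geq 3$, while $k = 0$ gives $h' = 2^m - 2$ of weight $m-1 \geq 3$, leaving $k = 1$ as the unique value with $\wt(h') \leq 2$, namely $h' = 2^m + 2^0$.

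The main obstacle I anticipate is the cluster of boundary configurations in which $s$ (or, after reduction, $k$) lies close to $n$: there a carry can be pushed beyond position $n-1$ and the reduction modulo $2^n-1$ wraps it back onto the protected low bits, so the clean ``low bits survive'' argument fails and the carry chain has to be tracked by hand, exactly as in the $(m,k)=(n-3,m-1)$ sub-case encountered in Lemma~\ref{lem:trace2}. Together with the transitional values $k \in \{m-1,m,m+1\}$, where the added pairs interact with the hole at position $2$, these are the places where the computation is most delicate; each, however, is a finite and routine verification.
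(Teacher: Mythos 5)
Your overall strategy is exactly the one the paper intends: the paper does not write out a proof of this lemma at all, stating only that it ``works similarly as the one of Lemma~\ref{lem:trace2}'', and your proposal is a faithful instantiation of that template (same bookkeeping with the base expansion $1+2+\sum_{i=3}^{m-1}2^i$, same symmetry reduction to $k\le s$, same use of the inequality $\wt(h\bmod(2^j-1))\le\wt(h)$). Your treatment of $k\in\{0,1\}$ and of the range $2\le k\le s\le m$ is correct, and the exceptional pairs and the resulting values of $h$ and $h'$ all match the statement.

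There is, however, one subcase in which the step you propose would actually fail: $k=s\ge m+1$. There the reduction modulo $2^s-1$ collapses \emph{both} pairs to $1+2$, giving $h_s=2^m-5+3+3=2^m+1$ of weight $2$; since the inequality only yields $\wt(h)\ge\wt(h_s)$, this certifies nothing. The lemma still holds there --- for $k=s$ one has $h=2^m-5+2^{k+1}+2^{k+2}$, whose binary expansion $1+2+\sum_{i=3}^{m-1}2^i+2^{k+1}+2^{k+2}$ (suitably wrapped when $k\ge n-2$) visibly has weight at least $m-1\ge 3$ --- but this must be argued directly rather than folded into the mod-$(2^s-1)$ step. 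A similar, more cosmetic slip occurs in your final paragraph: for $h'$ with $k=n-1$ the bit $2^0$ does \emph{not} persist, since the wrapped $2^{k+1}\equiv 1$ collides with it and a carry runs through $2^0+2^1$; one gets $h'=\sum_{i=2}^{m-1}2^i+2^{n-1}$, which still has weight $m-1\ge 3$, so the conclusion survives even though the stated reason does not. Apart from these two boundary configurations --- which your closing caveat about wrap-around only partly anticipates --- the proof is sound.
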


Let $4 \leq m \leq n-3$ and $r = 2^m-5$. From Lemma~\ref{lem:trace4}, we obtain that the coefficient $v_r$ of $x^r$ in the univariate representation of $f$ equals Expression (\ref{vr}) in which the sum in $s$ is in fact reduced to $s\in \{0,1,m\}$, i.e.,
\[v_r = u_r + u_{r+1+2} + u_{r+2+4} + u_{r+2^m+2^{m+1}}.\]

Further, from Lemma~\ref{lem:trace4}, we deduce that $u_r = a^{2^m}\ell_{n-1}^2 + a\ell_{m-1}^2$, $u_{r+1+2} = a^{2^m}\ell_0 + a \ell_m + a^{2^m}\ell_1^2 + a^{2^2}\ell_{m-1}^2$, $u_{r+2+4} = a^{2^m}\ell_2 + a^{2^2}\ell_m + a^{2^{m+2}} \ell_{n-m}^{2^m} + a\ell_2^{2^m}$, and $u_{r+2^m + 2^{m+1}} = a^{2^{m+2}}\ell_{n-1}^2 + a\ell_{m+1}^2$. Hence, we obtain the relation
\begin{equation}
\label{eq:fourth}
\begin{split}
v_{2^m-5} &= (a^{2^{m+2}}+a^{2^m})\ell_{n-1}^2 + a^{2^{m+2}} \ell_{n-m}^{2^m} + a\ell_{m+1}^2 + (a^{4}+a)\ell_m + (a^{4}+a)\ell_{m-1}^2 \\ &+  a\ell_2^{2^m} + a^{2^m}\ell_2  + a^{2^m}\ell_1^2 + a^{2^m}\ell_0.
\end{split}
\end{equation}

Finally, we consider the coefficient $v_{19}$.
\begin{lemma}\label{lem:trace5}
Let $n \in \mathbb{N}^*, s, k \in \mathbb{N}, s,k <n$ with $n \geq  9$ and $k \notin \{0,2,n-1\}$. Then, the binary expansion of $h \coloneqq 19 + 2^s + 2^{s+1} + 2^k + 2^{k+1} \mod (2^n-1)$ has weight strictly greater than 2. Moreover,
\begin{enumerate}
\item if $k=0$, the weight of $h$ is at most 2 only if $s=2$. In that case, we have $h=2^5 + 2^1$.
\item if $k=2$, the weight of $h$ is at most 2 only if $s=0$ or $s=n-1$. In that case, we have $h=2^5 + 2^1$ and $h=2^{5} + 2^{n-1}$, respectively.
\item if $k=n-1$, the weight of $h$ is at most 2 only if $s=2$. In that case, we have $h = 2^{5} + 2^{n-1}$.
\end{enumerate}

Moreover, for each $0 \leq k <n$, the binary expansion of $19 + 2^k + 2^{k+1} \mod (2^n-1)$ has weight strictly greater than 2.
\end{lemma}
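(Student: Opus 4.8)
The plan is to follow the template of the proof of Lemma~\ref{lem:trace2}, using that $19 = 2^0 + 2^1 + 2^4$, so that
\[ h = 2^0 + 2^1 + 2^4 + 2^s + 2^{s+1} + 2^k + 2^{k+1} \pmod{2^n-1}. \]
Since $h$ is symmetric under exchanging $s$ and $k$, I would prove once and for all the symmetric characterization that $\wt(h) \le 2$ holds if and only if the unordered pair $\{s,k\}$ equals $\{0,2\}$ or $\{2,n-1\}$, with respective values $h = 2^5 + 2^1$ and $h = 2^5 + 2^{n-1}$ (here $n \geq 9$ guarantees that bit $5$ and bit $n-1$ are distinct). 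All three numbered cases and the opening weight statement are then immediate corollaries; the weight statement in particular follows because $k \notin \{0,2,n-1\}$ forbids $\{s,k\}$ from being either special pair. As in Lemma~\ref{lem:trace2}, the basic tool for certifying $\wt(h) > 2$ is that reducing modulo $2^j-1$ cannot increase the $2$-weight, so it suffices to exhibit a $j < n$ with $\wt(h \bmod (2^j-1)) \ge 3$.

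The organizing observation is that if the seven exponents $0,1,4,k,k+1,s,s+1$ are pairwise distinct modulo $n$, then no carries occur and $\wt(h) = 7$; hence weight at most $2$ forces a collision, which can only happen when one of the adjacent pairs $\{k,k+1\}$, $\{s,s+1\}$ meets the fixed bits $\{0,1,4\}$ (i.e. $k$ or $s$ in $\{0,1,3,4,n-1\}$, the value $n-1$ coming from the wrap $2^n \equiv 2^0$), or when the two pairs meet each other ($|s-k|\le 1$). I would then dispose of the situation where exactly one pair is anchored near the bottom while the other is isolated at position $\geq 7$: there the bottom block $2^0+2^1+2^4+2^k+2^{k+1}$ occupies bit positions below $7$, its carries never reach the isolated pair, and so $\wt(h)$ equals $\wt(19+2^k+2^{k+1}) + 2 \ge 3$. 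This settles every configuration except the finitely many in which both pairs sit near positions $0$--$4$, are within distance one of each other, or wrap around through $n-1$.

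This remaining finite family of overlap configurations is the crux, because it is precisely there that a carry can cascade through a full run of ones and collapse the weight to $2$. I would compute $h$ directly in these cases, tracking the carry out of the block around bits $0$--$4$: the gap of $19$ at positions $2,3$ can be filled by the pair at position $2$ to produce the run $2^0+\dots+2^4 = 2^5-1$, after which the lowest bit supplied by the second pair triggers a cascade depositing a single one at bit $5$. This occurs exactly for $\{s,k\}=\{0,2\}$ (the pair at $2$ completing the run, the pair at $0$ adding $3$, giving $2^5+2^1$) and for $\{s,k\}=\{2,n-1\}$ (the pair at $2$ completing the run, the wrapped pair at $n-1$ adding $1$, giving $2^5+2^{n-1}$); I must then verify that all other overlap configurations leave weight at least $3$, which is a short bounded enumeration. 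Finally, the closing assertion about $h' = 2^0+2^1+2^4+2^k+2^{k+1}$ is the same analysis with only one adjacent pair added to $\{0,1,4\}$: a single pair can at most bridge the gap at bits $2,3$ and cannot by itself supply the extra lowest bit needed to start the collapsing cascade, so a direct check over the few overlap positions of $k$ (together with $k=n-1$ for the wrap) shows $\wt(h') \ge 3$ for every $k$. The main obstacle throughout is the careful carry bookkeeping in this overlap region, where the two genuine weight-$2$ solutions hide among many near-misses.
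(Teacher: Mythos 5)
Your proposal is correct and follows essentially the approach the paper intends: the paper omits a detailed proof of this lemma, stating only that it ``works similarly'' to that of Lemma~\ref{lem:trace2}, and your argument is exactly that method (write $19=2^0+2^1+2^4$, observe that weight $\le 2$ forces the adjacent pairs $\{k,k+1\}$, $\{s,s+1\}$ to collide with $\{0,1,4\}$ or with each other, dispose of the non-interacting configurations, and finish by a bounded enumeration of the overlap cases). I checked the enumeration: the only weight-$\le 2$ configurations are indeed $\{s,k\}=\{0,2\}$ with $h=2^5+2^1$ and $\{s,k\}=\{2,n-1\}$ with $h=2^5+2^{n-1}$, and $\wt(19+2^k+2^{k+1})\ge 3$ for all $k$, so your claimed characterization is accurate.
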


From Lemma~\ref{lem:trace5}, we obtain that the coefficient $v_{19}$ of $x^r$ in the univariate representation of $f$ equals Expression (\ref{vr}) in which the sum in $s$ is in fact reduced to $s\in \{0,2,n-1\}$, i.e.,
\[v_r = u_{r+1+2} + u_{r+4+8} + u_{r+2^{n-1}+1}.\]

Further, from Lemma~\ref{lem:trace5}, we deduce that $u_{r+1+2} = a^{2^5}\ell_{n-1}^{2^2} + a^2\ell_3^{2^2}$, $u_{r+4+8} = a^{2^5} \ell_1 + a^2 \ell_5 + a^{2^5}\ell_0^{2^{n-1}} + a^{2^{n-1}}\ell_6^{2^{n-1}}$, and $u_{r+2^{n-1} + 1} = a^{2^5}\ell_{n-3}^{2^2} + a^{2^{n-1}}\ell_3^{2^2}$. Hence, we obtain the relation
\begin{equation}
    \label{eq:fifth}
    v_{19} = a^{32}\ell_0^{2^{n-1}} + a^{32} \ell_1 + (a^2 + a^{2^{n-1}})\ell_3^{4} + a^2 \ell_5  + a^{2^{n-1}}\ell_6^{2^{n-1}} + a^{32}\ell_{n-3}^{4} + a^{32}\ell_{n-1}^{4}.
\end{equation}

We now have all relations we need in order to prove our result.

\begin{proof}[Proof of Theorem~\ref{thm:trace}]
As we have already outlined above, we assume that $n$ is an odd integer with $n \geq 11$. Let us assume the existence of $a \in \F_{2^n} \setminus \{0\}$ and of a linearized polynomial $L(x) = \sum_{j=0}^{n-1}\ell_jX^{2^j} \in \F_{2^n}[X]$, such that $f$ (as defined in Equation~(\ref{eq:f})) equals the indicator of $\{0\}$, equal to $x^{2^n-1}+1\in \mathbb{F}_{2^n}[x]/(x^{2^n}+x)$.  Then, Relations~(\ref{eq:second}),~(\ref{eq:first}),~(\ref{eq:fourth}), and~(\ref{eq:fifth}) hold where the $v_r$ on the left-hand side are all equal to 0.

\paragraph{High level idea.} Our idea for simplifying the rather complex situation is to add to $L$ a polynomial satisfying Relation (\ref{eq:second}) for every $m$ and matching $L$ at as many coefficients as possible. The fact that this new polynomial has as few nonzero coefficients as possible will simplify the situation. Relation (\ref{eq:second}), if true for every $m$, writes: $\forall m, \ell_m+\ell_{m-1}^2=a^{2^{m-1}} (\ell_{m-1}+\ell_{m-2}^2)=\dots =a^{2^{m-1}+2^{m-2}+\dots +2}(\ell_1+\ell_0^2)= a^{2^m}\; \frac{\ell_1+\ell_0^2}{a^2}$.  A linearized polynomial $L(x)=\sum_{j=0}^{n-1}\ell_jx^{2^j}$ satisfies then Relation (\ref{eq:second}) for every $m\in \mathbb{Z}/n\mathbb{Z}$ if and only if  $\ell_m+\ell_{m-1}^2$ equals,  for every $m$, $a^{2^m}$ times a constant, that is, $L(x)+(L(x))^2=(b+b^2)\, tr(ax)$ for some $b$ (indeed, for every $x$, $L(x)+(L(x))^2$ has zero trace and this must then be the case also of the multiplicative constant). The relation $L(x)+(L(x))^2=(b+b^2)\, tr(ax)$ is equivalent to $L(x)=b\, tr(ax)+l(x)$ where $l(x)+(l(x))^2=0$, that is, $l$ is Boolean (and linear), that is, $l(x)=tr(cx)$ for some $c$. Having the choice of $b$ and $c$, we can try to choose them so that  $L(x)+b\, tr(ax)+tr(cx)$ has no term in $x^{2^3}$ and no term in $x^{2^4}$.

\paragraph{The proof in detail.} 
We observe that, for any $b,d \in \F_{2^n}$ and any Boolean function $\phi \colon \F_{2^n} \rightarrow \F_2$, adding $b \cdot tr(ax) +  \phi(x) + dx^2+d^2x$  to $L(x)$ does not change the function \[f(x)=(tr(x^{2^n-4})+1)(tr(ax)+1)(tr(x^{2^n-4}L(x))+1).\] Indeed, adding $b \cdot tr(ax)$ adds a function multiple of $tr(ax)$ to $(tr(x^{2^n-4}L(x))+1)$ which cancels with $(tr(ax)+1)$, adding $\phi(x)$ adds a function multiple of $tr(x^{2^n-4})$ to $(tr(x^{2^n-4}L(x))+1)$ which cancels with $(tr(x^{2^n-4})+1)$, and adding $dx^2+d^2x$ adds $dx^{2^n-2}+d^2x^{2^n-3}=dx^{2^n-2} + (dx^{2^n-2})^2$ to $(tr(x^{2^n-4}L(x))+1)$ which has zero trace and cancels. 

Therefore, if $L(x) = \sum_{i=0}^{n-1} \ell_i x^{2^i}$ is a linearized polynomial such that the coefficients $\ell_0,\dots,\ell_{n-1}$ satisfy all the Relations~(\ref{eq:second}),~(\ref{eq:first}),~(\ref{eq:fourth}), and~(\ref{eq:fifth}), then also, for every $b,c,d \in \F_{2^n}$, the coefficients $l_0, l_1, \dots, l_{n-1}$ of the polynomial 
\begin{align*} L'(x) &\coloneqq \sum_{i=0}^{n-1} l_i x^{2^i} =  b \cdot tr(ax) + tr(cx) + dx^2 + d^2x + \sum_{i=0}^{n-1} \ell_i x^{2^i} \\
&= (\ell_0 + ba + c + d^2)x + (\ell_1 + ba^2 + c^2 + d)x^2 + \sum_{i=2}^{n-1} (\ell_i + ba^{2^i} + c^{2^i})x^{2^i}
\end{align*}
satisfy all those relations and we can choose any $b,c,d \in \F_{2^n}$ without loss of generality. We split the remainder of this proof into two cases.

\paragraph{Case $tr(\frac{\ell_3}{a^8} + \frac{\ell_4}{a^{16}})= 0$ or $tr(\frac{\ell_4}{a^{16}} + \frac{\ell_5}{a^{32}})= 0$.}
 If $tr(\frac{\ell_3}{a^8} + \frac{\ell_4}{a^{16}})= 0$,  we can choose $b \in \F_{2^n}$ and $c \in \F_{2^n}$ such that   $\frac{\ell_3}{a^8}+b+\left(\frac ca\right)^8=0$ and $\frac{\ell_4}{a^{16}}+b+\left(\frac ca\right)^{16}=0$, i.e., $l_3 = \ell_3+b\, a^8+c^8= l_4 = \ell_4+b\, a^{16}+c^{16}=0$. We then have, by applying Relation (\ref{eq:second}) for $m \notin \{0,1,2,3,4,n-2,n-1\}$, that $l_{n-3}=l_{n-4}=\dots =l_4=l_3=0$.
 
 Similarly, if $tr(\frac{\ell_4}{a^{16}} + \frac{\ell_5}{a^{32}})= 0$, , we can choose $b \in \F_{2^n}$ and $c \in \F_{2^n}$ such that   $\frac{\ell_4}{a^{16}}+b+\left(\frac ca\right)^{16}=0$ and $\frac{\ell_5}{a^{32}}+b+\left(\frac ca\right)^{32}=0$, i.e., $l_4 = \ell_4+b\, a^{16}+c^{16}= l_5 = \ell_5+b\, a^{32}+c^{32}=0$. We then again have, by applying Relation (\ref{eq:second}) to the coefficients $l_m$ for $m \notin \{0,1,2,3,4,n-2,n-1\}$, that $l_{n-3}=l_{n-4}=\dots =l_4=l_3=0$.
 
 Further, by choosing $d \in \F_{2^n}$, we can assume without loss of generality\@ that $l_0 = 0$. Thus, we can restrict ourselves to functions $L'(x)=l_1x^2+l_2x^4+l_{n-2}x^{2^{n-2}}+l_{n-1}x^{2^{n-1}}$.

Relation (\ref{eq:first}), which is valid for $m\in \{6,\dots ,n-3\}$ gives, for $m=6, \dots, n-5$, 
\[
0 =  a^{2^{m-1}}l_2^{2^{m-1}} + l_2^{2^m} + a^{2^{m+1}}l_1^{2^{m}}  + a^{2^m} = (al_2 + l_2^2 + a^4l_1^2 + a^2)^{2^{m-1}}\]
and taking the $2^{m-1}$-th root yields $l_1^2 = \frac{l_2^2 + al_2 + a^2}{a^4}$. Applying  Relation (\ref{eq:first}) for $m=n-4$ gives \[
l_{n-2} = a^{2^{n-5}}l_2^{2^{n-5}} +  l_2^{2^{n-4}} + a^{2^{n-3}}l_1^{2^{n-4}}  + a^{2^{n-4}}\] and raising it to the $2^5$-th power yields $l_{n-2} = 0$. Similarly, applying Relation (\ref{eq:first}) for $m=n-3$ results in
\[ l_{n-1} = a^{2^{n-4}}l_2^{2^{n-4}} +  l_2^{2^{n-3}} + a^{2^{n-2}}l_1^{2^{n-3}}  + a^{2^{n-3}},\] thus also $l_{n-1} = 0$.

Now, applying Relation~(\ref{eq:fourth}) for $m \in \{4,5,\dots,n-3\}$ and substituting $l_1$ gives
\begin{equation}
\label{eq:final}
a^{2^m} l_2 + a^{2^m-4}l_2^2  + a^{2^m-3}l_2 + a l_2^{2^m}  = a^{2^m-2}.
\end{equation}
For $m = 4,5$ in Equation~(\ref{eq:final}), we obtain the system 
\begin{align*}
\begin{cases}a^{16} l_2 + a^{12}l_2^2  + a^{13}l_2 + a l_2^{16} &= a^{14}, \\
a^{32} l_2 + a^{28}l_2^2  + a^{29}l_2 + a l_2^{32} &= a^{30}.
\end{cases}
\end{align*}

Multiplying the first equation by $a^{16}$ and adding the second equation yields $a^{17} l_2^{16} + a l_2^{32} = 0$, i.e., $l_2 \in \{0,a\}$. This is a contradiction as neither of those values for $l_2$ satisfy the system.

 \paragraph{Case $tr(\frac{\ell_3}{a^8} + \frac{\ell_4}{a^{16}}) = tr(\frac{\ell_4}{a^{16}} + \frac{\ell_5}{a^{32}})= 1$.} In this case, we have $tr(\frac{\ell_3}{a^8} + \frac{\ell_5}{a^{32}}) = 0$. Because $n$ is odd, the image of the mapping $x \mapsto x + x^4$ over $\F_{2^n}$ is exactly the set $\{x \in \F_{2^n} \mid tr(x)=0\}$. Therefore,  we can choose $b \in \F_{2^n}$ and $c \in \F_{2^n}$ such that   $\frac{\ell_3}{a^8}+b+\left(\frac ca\right)^8=0$ and $\frac{\ell_5}{a^{32}}+b+\left(\frac ca\right)^{32}=0$, i.e., $l_3 = \ell_3+b\, a^8+c^8= l_5 = \ell_5+b\, a^{32}+c^{32}=0$. By applying Relation (\ref{eq:second}) to $m=5$, we obtain $l^2_4 = a^{2^4}l_4$, i.e., $l_4 = 0$ or $l_4 = a^{2^4}$. If $l_4=0$, then we can make the same calculations as in the previous case above, so we assume $l_4 = a^{2^4}$. We then obtain (also by applying Relation (\ref{eq:second})) that $l_6 = a^{2^6}$ and $l_7 = 0$, and inductively that $l_m = a^{2^m}$ for even values of $m \in \{4,5,\dots,n-3\}$ and $l_m = 0$ for odd values of $m \in \{4,5,\dots,n-3\}$. Further, by choosing $d \in \F_{2^n}$, we can assume without loss of generality\@ that $l_0 = 0$. 

Applying Relation~(\ref{eq:first}), which holds for $m \in \{6,7,\dots,n-3\}$, to $m = 6$ yields 
\begin{align*} 0 &= l_{m+2} + a^{2^{m-1}}l_m^{2^{n-1}} + l_m + a^{2^{m-1}}l_2^{2^{m-1}} + l_2^{2^m} + a^{2^{m+1}}l_1^{2^{m}}  + a^{2^m} \\
&= a^{2^{m-1}}l_2^{2^{m-1}} + l_2^{2^m} + a^{2^{m+1}}l_1^{2^{m}}  + a^{2^m} +a^{2^{m+2}}.\end{align*}
and by taking the $2^{m-1}$-th root, we obtain $0 = al_2 + l_2^2 + a^4l_1^2 + a^2 + a^8$. Applying Relation~(\ref{eq:first}) to $m=n-4$ (which is odd) gives
\begin{align*}
l_{n-2} &= a^{2^{m-1}}l_2^{2^{m-1}} + l_2^{2^m} + a^{2^{m+1}}l_1^{2^{m}} + a^{2^m} + a^{2^{m+2}} \\
&= (al_2 + l_2^2 + a^4l_1^2 + a^2 + a^8)^{2^{m-1}} = 0
\end{align*}
and applying Relation~(\ref{eq:first}) to $m=n-3$ (which is even) yields
\begin{align*}
l_{n-1} &= a^{2^{m-1}}l_2^{2^{m-1}} + l_2^{2^m} + a^{2^{m+1}}l_1^{2^{m}} + a^{2^{m+1}}l_0^{2^{m-1}} + a^{2^m} \\
&= (al_2 + l_2^2 + a^4l_1^2 + a^2 + a^8)^{2^{m-1}} + a^{2^{m+2}}= a^{2^{n-1}}.
\end{align*}

Now, applying Relation~(\ref{eq:fourth}) for $m \in \{4,5,\dots,n-3\}$ and substituting $l_1$ and $l_{n-1}$ yields
\begin{equation}
\label{eq:final_tr1}
(a^{2^m}+a^{2^{m}-3}) l_2 + a^{2^m-4}l_2^2  + a l_2^{2^m}  = a^{2^m-2}  + a^{2^{m+2}+1}.
\end{equation}
For $m = 4,5$ in Equation~(\ref{eq:final_tr1}), we obtain the system
\begin{align*}\begin{cases}(a^{16}+a^{13}) l_2 + a^{12}l_2^2  + a l_2^{16} &= a^{14} + a^{65}, \\
(a^{32}+a^{29}) l_2 + a^{28}l_2^2   + a l_2^{32} &= a^{30} + a^{129}.
\end{cases}
\end{align*}

Multiplying the first equation by $a^{16}$ and adding the second equation results in $a^{17} l_2^{16} + a l_2^{32} = a^{81} + a^{129}$, i.e., $l_2^2 + al_2 + (a^5 + a^8) = 0$ and the two solutions of this quadratic equation are $l_2 = a^4$ and $l_2 = a^4 + a$. Plugging those two values of $l_2$ into the first equation of the system yields $a^{17}=a^{14}$, that is, $a^3=1$, that is, $a\in \mathbb{F}_{2^{\gcd(2,n)}}^*$, that is, $a=1$ since $n$ is necessarily odd, in which case the system has $l_2 = 0$ and $l_2 = 1$ as the only solutions. In both cases, we have $l_1 = 0$. Relation~(\ref{eq:fifth}) writes $0=1+1+1$, since $v_{19}=l_0=l_1=l_5=0$ and $l_6=l_{n-1}=l_{n-3}=1$. This completes the proof by contradiction.
\end{proof}

\section{Concluding Remarks}
In the present paper, we have shown that the Gold APN and switched cube APN functions (including the APN family $x \mapsto x^3 + tr(x^9)$) in dimensions $n>5$  are not extendable to quadratic APN functions in one more variable having lowest possible (necessarily nonzero) nonlinearity. The question of the existence of quadratic APN functions in dimension $n > 7$ (necessarily in odd dimension) that admit an extension of such type remains open. We  know (see e.g. \cite{carlet2021boolean}) that the linearity of an $n$-variable quadratic Boolean function equals $2^\frac{n+k}2$, where $k$ equals the dimension of its linear kernel (that is, the vector space of the directions of its derivatives that are constant). We know that quadratic APN functions in odd dimension are almost bent (AB), that is, have linearity $2^\frac{n+1}2$. A quadratic $n$-variable AB function is then extendable to a quadratic $(n+1)$-variable function $F$ with linearity $2^n$ if some component function of $F$ has a linear kernel of dimension $n-1$, while every component function of the restriction of $F$ to a hyperplane (valued in a hyperplane) has a linear kernel of dimension 1. We leave the question open whether this is possible for large enough values of $n$. Even the question  of the existence of known quadratic APN functions in dimension $n>7$ that admit an extension of such type remains open.  A first task needed to be achieved for progressing on this problem would be to determine the ortho-derivatives of the other known classes of APN functions (see a list of known AB functions up to equivalence in e.g. \cite[Chapter 11]{carlet2021boolean}). There are in particular binomials, whose gamma functions are obtained in~\cite{DBLP:conf/itw/BudaghyanCH11}, but only up to a conjecture, and depend on a function that is not explicit. There are hexanomials, whose case seems more accessible. The more general problem of determining whether APN functions with low nonlinearity exist for any $n$ large enough is a wide open question, with huge interest for our knowledge on general APN functions.  

\subsubsection*{Acknowledgments}
We thank the reviewers for their detailed and useful comments, which helped improving the paper.

The research of the first author is supported by Deutsche  Forschungsgemeinschaft  (DFG) under Germany's Excellence Strategy - EXC 2092 CASA - 390781972. 

The research of the second author is partly supported by the Trond Mohn Foundation and Norwegian Research Council.

\end{document}